\let\UnderScore_
\global\mdfdefinestyle{default}{%
	usetwoside=false,%
	linecolor=black,linewidth=1pt,%
	innerleftmargin=5pt,innerrightmargin=5,%
%	leftmargin=-120pt,rightmargin=-120pt,%
	everyline=true
}
\newcommand\Mycomb[2][^n]{\prescript{#1\mkern-0.5mu}{}C_{#2}}
\newcommand{\tabincell}[2]{\begin{tabular}{@{}#1@{}}#2\end{tabular}}
\theoremstyle{definition}
\newtheorem{definition}{Definition}[section]
\newtheorem{theorem}{Theorem}[section]
\newenvironment{btHighlight}[1][]
{\begingroup\tikzset{bt@Highlight@par/.style={#1}}\begin{lrbox}{\@tempboxa}}
	{\end{lrbox}\bt@HL@box[bt@Highlight@par]{\@tempboxa}\endgroup}
\newcommand\btHL[1][]{%
	\begin{btHighlight}[#1]\bgroup\aftergroup\bt@HL@endenv%
	}
	\def\bt@HL@endenv{%
	\end{btHighlight}%   
	\egroup
}
\newcommand{\bt@HL@box}[2][]{%
	\tikz[#1]{%
		\pgfpathrectangle{\pgfpoint{1pt}{0pt}}{\pgfpoint{\wd #2}{\ht #2}}%
		\pgfusepath{use as bounding box}%
		\node[anchor=base west, fill=orange!25,outer sep=.5pt,inner xsep=0.5pt, inner ysep=0.15pt, rounded corners=1pt, minimum height=\ht\strutbox-.1pt,#1]{\raisebox{.01pt}{\strut}\strut\usebox{#2}};
	}%
}
\lstdefinestyle{mystyle}{
	% backgroundcolor=\color{backcolour},   
	frame=single,
	framexleftmargin=0pt,
	commentstyle=\color{green},
	keywordstyle=\color{blue}\bfseries,
	numberstyle=\tiny\color{gray},
	stringstyle=\color{purple},
	basicstyle=\tiny\ttfamily\bfseries,
	breakatwhitespace=false,         
	breaklines=false,                 
	captionpos=b,                    
	keepspaces=true,     
	numbers=none,                    
	numbersep=4pt,                  
	showspaces=false,                
	showstringspaces=false,
	showtabs=false,                  
	tabsize=2,
	language=Java,
	escapechar=\%,
	moredelim=**[is][\btHL]{`}{`},
	moredelim=**[is][{\btHL[fill=red!40]}]{@}{@},%,draw=red,dashed,thick	
}
\newcommand\@firstoffour[4]{#1}
\newcommand\@secondoffour[4]{#2}
\newcommand\@thirdoffour[4]{#3}
\newcommand\@fourthoffour[4]{#4}
\newcommand\MyCancel[2][{0pt}{0pt}{0pt}{0pt}]{%
	\edef\MyCancel@xoffsetleft {\number\dimexpr\@firstoffour#1}%
	\edef\MyCancel@yoffsetleft {\number\dimexpr\@secondoffour#1}%
	\edef\MyCancel@xoffsetright{\number\dimexpr\@thirdoffour#1}%
	\edef\MyCancel@yoffsetright{\number\dimexpr\@fourthoffour#1}%
	\edef\MyCancel@width {\number\wd0}%
	\edef\MyCancel@height{\number\ht0}%
\DeclareMathOperator*{\argmin}{argmin}
\newcommand{\cf}{\hbox{\emph{cf.}}\xspace}
\newcommand{\eg}{\hbox{\emph{e.g.}}\xspace}
\newcommand{\ie}{\hbox{\emph{i.e.}}\xspace}
\newcommand{\st}{\hbox{\emph{s.t.}}\xspace}
\newcommand{\wrt}{\hbox{\emph{w.r.t.}}\xspace}
\newcommand{\etc}{\hbox{\emph{etc.}}\xspace}
\newcommand{\tool}{\textsf{PATIC}\xspace}
\begin{document}

\title[Short Title]{Demystifying What Code Summarization Models Learned}

\author{Yu Wang}
                                        %% can be repeated if necessary
%\orcid{nnnn-nnnn-nnnn-nnnn}             %% \orcid is optional
\affiliation{
  \department{State Key Laboratory for Novel Software Technology, Department of Computer Science and Technology}              %% \department is recommended
  \institution{Nanjing University}            %% \institution is required
  %\streetaddress{}
  \city{Nanjing}
  \state{Jiangsu}
  \postcode{210023}
  \country{China}                    %% \country is recommended
}
\email{yuwang\_cs@nju.edu.cn}          %% \email is recommended

%% Author with two affiliations and emails.
\author{Ke Wang}
                                        %% can be repeated if necessary
%\orcid{nnnn-nnnn-nnnn-nnnn}             %% \orcid is optional
\affiliation{
  %\position{Position2a}
  \department{Security, Cryptography, and Blockchain}              %% \department is recommended
  \institution{Visa Research}            %% \institution is required
  %\streetaddress{}
  \city{Palo Alto}
  \state{CA}
  %\postcode{}
  \country{USA}    
}
\email{kewang@visa.com}          %% \email is recommended

\begin{abstract}
Study patterns that models have learned has long been a focus of pattern recognition research. Explaining what patterns are discovered from training data, and how patterns are generalized to unseen data are instrumental to understanding and advancing the pattern recognition methods. Unfortunately, the vast majority of the application domains deal with continuous data (\ie statistical in nature) out of which extracted patterns can not be formally defined. For example, in image classification, there does not exist a principle definition for a label of cat or dog. Even in natural language, the meaning of a word can vary with the context it is surrounded by. Unlike the aforementioned data format, programs are a unique data structure with a well-defined syntax and semantics, which creates a golden opportunity to formalize what models have learned from source code. This paper presents the first formal definition of patterns discovered by code summarization models (\ie models that predict the name of a method given its body), and gives a sound algorithm to infer a context-free grammar (CFG) that formally describes the learned patterns.
%\wyc{would "gives a sound and complete" in the abstract be too strong? even though I know it is the algorithm that is sound and complete.} 

We realize our approach in \tool which produces CFGs for summarizing the patterns discovered by code summarization models. In particular, we pick two prominent instances, code2vec and code2seq, to evaluate \tool. \tool shows that the patterns extracted by each model are heavily restricted to local, and syntactic code structures with little to none semantic implication. Based on these findings, we present two example uses of the formal definition of patterns: a new method for evaluating the robustness and a new technique for improving the accuracy of code summarization models. 
%Particularly, with the new technique, we manage to further advance the state-of-the-art code summarization models while keeping their architectures as they are.

Our work opens up this exciting, new direction of studying what models have learned from source code.

\end{abstract}
\keywords{Context-Free Grammar, Pattern Recognition, Code Summarization, Deep Neural Networks, Model Generalizability}
\maketitle

\section{Introduction}

Riding on the major breakthroughs in deep learning together with the ever-increasing public datasets and computation power, machine learning models have enabled state-of-the-art solutions to a wide range of problems including image classification~\cite{krizhevsky2012imagenet,touvron2019fixing}, machine translation~\cite{devlin-etal-2019-bert,NIPS2019-8928}, and game playing~\cite{Silver2016,Silver2017,Vinyals2019}. 

The success of the learning-based approaches can be largely attributed to their capability of discovering patterns exhibited in a large amount of data. An influential subfield within machine learning has been dedicated to explaining, visualizing patterns that models have learned from data. Moreover, the field has been receiving growing attention for its leading role in tackling some of the most imminent challenges in Artificial Intelligence (AI). For example, explainability is likely to be a central goal of the next-generation AI technology.
%in sectors like finance, medicine, and transportation, there has been increasing demand for AI systems to given explanations for every prediction they make. In this regard, investigating 
Revealing what models have learned is a crucial first step to designing such explainable AI systems.
%certainly helps people to understand the process in which AI systems make decisions. 
In addition, from a scientific standpoint, dissecting the internal operation and behavior of complex models is necessary. Because without a clear understanding of how and why machine learning models work, the development of better models is reduced to trial-and-error. 
%perspective model dissection, pattern recognition techniques provide a means to debug models---often approached as a black box---when mispredictions occur, in turn address the model weaknesses. \ke{continue to elaborate why study patterns is important copy CNN visualization paper.}

For a few notable efforts,~\citet{chen2006composite} present a context-sensitive grammar to model
%in an And-Or tree, a graphical representation for problem decomposition, which produces composite graph templates to account for 
the wide variations in object configurations via composite graphical templates. 
%Depending on which templates are matched onto a test image, their combinations determine the objects in the image. 
A strength of their approach is the ability to explain what patterns are recognized from test data during the inference time, in the case of cloth modeling discussed  in~\cite{chen2006composite}, predictions are made using templates representing shoes, hands, faces, \etc~\citet{10.1007Z} introduce a visualization technique that gives insights into the function of individual feature layers and the end-to-end operation of a convolutional network, one class of Deep Neural Network (DNN) commonly applied in image classification. As a diagnostic tool, the visualization technique allows them to find model architectures that outperform AlexNet~\cite{krizhevsky2012imagenet}, the then state-of-the-art model on ImageNet~\cite{deng2009imagenet}. We defer a detailed survey of related work to Section~\ref{sec:rel}.

Despite the significant stride, formalizing patterns that models have learned remains to be an exceedingly challenging task. This is in large part due to the nature of the problem domain to which learning-based approaches are applied. Models almost exclusively deal with continuous data out of which learning formal patterns is difficult if not impossible. For example, no machine learning models known to this day set out to learn principle definitions for labels in ImageNet (\eg panda, ostrich, goldfish, \etc). 
%because such definitions do not exist. 
Even in the area of natural language processing, learning formal patterns is a tricky task since the meaning of words can be ambiguous. 

When programming seems to have become yet another popular domain for machine learning models (exemplified by the DNNs), it is vitally important to recognize that program is a fundamentally different data structure. Specifically, it is discrete in nature with a well-defined syntax and semantics. Syntactically, programs are written in a way that satisfies the recursive rules (\ie production rules) defined in a context-free grammar. Semantically, the behavior of a program satisfies the inference rules defined in the small-step semantics~\cite{plotkin87structural}. All of the above lead to the insight of our work, that is, patterns learned by models from source code can be formalized. However, we face an important challenge: how to efficiently navigate through an enormous search space of 
% comprising a potentially large number of relations/predicates specifying a variety of 
diverse program properties ranging from syntax to semantics? 
% \wyc{I think this sentence is a bit hard to understand.}\ke{what part specifically.}

Our solution is based on a key and rather unexpected observation we made about the behavior of many prominent models of code. Built on the work of ~\citet{wang2019coset}, which finds syntactically trivial and semantically preserving code edits frequently cause models to alter their predictions, we observe an even more surprising phenomenon. That is when models are given a program to predict, almost always the program can be reduced to very few statements (\ie $\leq$ 2) for which models make the same prediction as they do for the original program. This is a significant finding in two ways. First, (1) it indicates a small, local window of code sufficiently covers the patterns that models look for to predict the properties of the entire program.
%models base their predictions on very small, local patterns, 
Therefore, the space for searching the pattern definitions is orders of magnitude smaller than one would have anticipated. Second, (2) for such simple patterns which are often semantically meaningless, predicates of semantic properties can be safely ignored, which further restricts the search space to predicates of syntactic properties. Based on (1) and (2), a natural idea for defining a pattern emerges: synthesizing rules based on syntactic properties of the key statements in the original program. However, there is a caveat: can the key statements alone always preserve the label models predicted for the original program regardless of the surrounding context? To address this issue, we find the set of valid programs in which the key statements do preserve the original predicted label from which we define the pattern that models learned.
%for which models make the same prediction as they do on the original program, and then 
%infer
%we opt for a more principled and elegant approach which uses 
%a context-free grammar as the definition of the pattern models learned \wrt the label they predicted for the original program.
% the space containing all programs that exhibit the pattern and none that does not.

At the technical level, we propose "Abstract, Mutate, Concertize, and Summarize", a novel method for pattern formalization. First, given a set of programs $\mathcal{P}$ all with a label $L$ predicted by a model $M$, we abstract away the statements on each program in $\mathcal{P}$ that do not cause $M$ to alter its prediction. We call each remaining code snippet a \textit{seed}, 
%$S_i$, 
which 
%can be deemed as a template 
captures the essence of the prediction made by $M$ for the label $L$. Second, since machine learning models have been known for their generalization capability, we conjecture programs that resemble a seed are likely to be predicted with the same label $L$. Therefore, we mutate each seed to obtain additional code snippets, which we call \textit{mutants}.
%$S_{i}^{1},S_{i}^{2},\dots,S_{i}^{n}$ 
%for which $M$ still predicts the 
%a model makes the same prediction of 
%label $L$. Naturally, those will also be considered as seeds. 
Third, we synthesize full-fledged programs by inserting statements into each seed and mutant.
%filling the holes left on each template due to the abstraction performed in the first step. 
In particular, we enumerate a diverse set of statements and expressions using the grammar of the language $\mathcal{P}$ are written in. Later, we pass each synthesized program to the model to get a label. Finally, we infer a context-free grammar that describes all synthesized programs for which $M$ predicts the label $L$
% in $\mathcal{S^+}$ but no program in $\mathcal{S^-}$, 
as the principle definition of a pattern learned by the model $M$ \wrt the label $L$.
%We denote by $\mathcal{S^+}$ the set of programs for which model predicts the label $L$, and $\mathcal{S^-}$ another set for which model predicts a different label. Finally, we infer a context-free grammar, which describes all program in $\mathcal{S^+}$ but no program in $\mathcal{S^-}$, as the principle definition of a pattern learned by the model $M$ \wrt the label $L$.

We develop a tool, \tool, as an implementation of the method "Abstract, Mutate, Concertize, and Summarize", which automatically formalizes the patterns learned by two code summarization models: code2vec~\cite{Alon2019code2vec} and code2seq~\cite{alon2018code2seq}. Code summarization refers to a task in which models aim to infer the name of a method given its body. Figure~\ref{fig:met} shows an example. The correct prediction for this method is \texttt{reverseArray}. We target code summarization models due to the tremendous impact they have made to the programming language community. Since the publication of code2vec in \textit{Principles of Programming Languages (POPL)} two years ago, it has not only gathered many citations (\ie 130+) but also led to several interesting follow-up works (\eg code2seq, sequence GNN~\cite{fernandes2018structured}, and LiGER~\cite{Wang101145}). 

\begin{figure*}[thb!]
	%\captionsetup[subfigure]{aboveskip=-1pt,belowskip=-1pt}
	\centering
	\begin{subfigure}{0.65\textwidth}
		\lstset{style=mystyle}
		\lstinputlisting[linewidth=6.8cm,morekeywords={var, public, String}]{code/method.cs}
	\end{subfigure}
	\quad
	\begin{subfigure}{0.253\textwidth}
		\centering
		\includegraphics[width=\textwidth]{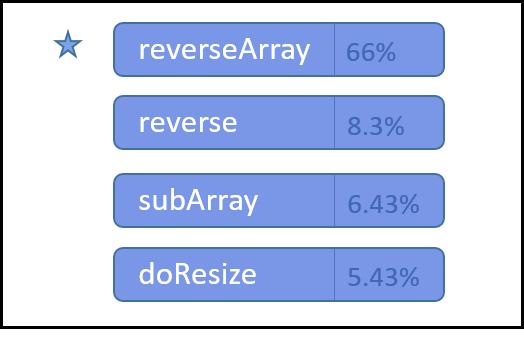}
	\end{subfigure}
	\caption{An example method extracted from~\cite{Alon2019code2vec} whose name is stripped out for models to predict. The top-4 prediction made by code2vec is shown on the right.}
	\label{fig:met}
\end{figure*}

Through the context-free grammars that \tool inferred, we find that the patterns learned by both evaluated models are simple \wrt all labels included in all Java-small, Java-med, and Java-large~\cite{alon2018code2seq}, three public, large-scale, cross-projects datasets used in many code summarization works. The average number of tokens presented in a seed is computed to be 15.59.
%\wyc{the value for code2seq is 17.09  and for code2vec is 13.99. }. 
For almost all the methods, the vast majority of the statements in their body can be removed, and the resulted seeds consist of less than two statements.
%patterns, \ke{address Yu's comments: vast majority are removed from program and use two lines as a more accurate answer.} all of their seeds consist of less than one line. 
In addition, there is little constraint on the synthesis of full-fledge programs given the presence of seed statements, many constructs we insert into seeds do not even remotely resemble the semantics of the non-seed statements. Nevertheless, models display a strong tendency to keep the predictions they made for the original programs. Our finding
%In addition, the inferred context-free grammars also suggest that the patterns learned by neither code2vec nor code2seq precisely describe the properties of the programs for a target label. 
%are general covering a board range of programs. 
%On average, a grammar that defines a pattern describes almost XXXXX distinct control flow structures, and a even larger amount of syntactic variants. 
%For example, we find the vast majority of the methods that exhibit the pattern which code2vec (or code2seq) looks for when predicting the name \texttt{reverseArray} do not even remotely resemble the semantics of reversing arrays, 
%can be written with approximately XXXX distinct control flow structures. This finding 
indicates that neither code2vec nor code2seq tries to learn a global, semantic representation for a given method, instead, they use local, syntactic program features as a proxy to simplify their memorization of the method.
%patterns that neither model learned capture the essence of reversing arrays. 

%besides, each context free grammar that defines a pattern for a label does not describe any program with a different label (\ie \textit{soundness}), nor does it miss any program of the same label within a bounded space of programs (\ie \textit{semi-completeness}) since there are infinite amount of programs that can be written in any modern programming language. 

Based on our findings in what code summarization models have learned, we present two example uses of our context-free grammar-based pattern formalization. First, we propose a new method to evaluate the robustness of code summarization models. In particular, we construct attacks to expose their vulnerabilities to small perturbations to the input programs. Our intuition is to concentrate changes on the seed statements, the part of a method on which models predominately based their predictions, to sway the predictions models made for the original programs. In addition, we introduce four semantically-preserving program transformations which enable us to find very small perturbations (\ie $\leq$ 2 tree-edit distance between the ASTs of perturbed and unperturbed programs) for every correctly-predicted test method in Java-small, Java-med and Java-large.
%metric to measure how well code summarization models generalize from training data. Specifically, for a pattern definition $G$ \wrt a label $L$ predicted by a model $M$, we compute a score for $L$ based on (1) how many methods written in $G$ that have label $L$ as the ground truth; (2) how many methods written in $G$ that have a different label from $L$ as the ground truth; and (3) how many methods have a label $L$ as the ground truth but can not be described by $G$. Since there are many labels $M$ needs to predict, we average the score computed for each label to obtain a generalizability score for $M$. 
Second, we propose a new technique to improve the generalizability of code summarization models. In the spirit of adversarial training~\cite{goodi}, we opt to include programs synthesized to address a particular weakness of code summarization models to re-train the models. Technically, after collecting the training programs for which models made the incorrect predictions, we inject their seeds to other programs to synthesize additional training data. Specifically, by assigning the label of the hosting program to the synthesized program, we guide models to shift their attention to different syntactic structures than they previously attended to, which paves the way for them to connect to the ground truth. After undergoing such a re-training process, both code2vec and code2seq have become more accurate, especially code2seq which achieves the state-of-the-art results on Java-med, and Java-large. 
%\wyc{i think this sentence has different meaning with the last contribution. }

%one simple example, after we find code summarization models rely on the names of identifiers to make predictions, we create multiple copies of each method in the training set after anonymizing/randomizing the variable names. 
%Results show all models become not only more accurate but also more generalizable (according to the new metric we proposed).

This paper makes the following contributions:
\begin{itemize}
	\item A formal definition of patterns learned by code summarization models.
	
	\item A sound algorithm for formalizing patterns that code summarization models learned. Assuming the monotonicity property (\cf Definition~\ref{def:mono}), the algorithm is also complete.  
%	Both the definition and method are applicable to models that target other tasks in programming language. 
	
	\item An implementation, \tool, which automatically generates definitions of patterns discovered by code summarization models in the form of context-free grammars.
%	 We have open-sourced \tool at \url{https://figshare.com/articles/software/PATIC/12638918}.

	\item An empirical evaluation of \tool on two prominent code summarization models: code2vec, and code2seq. Through the context-free grammars that \tool inferred, we find the patterns neither model learned precisely capture the properties of programs with the predicted label.
	
	\item A new method for evaluating the robustness of a model, which finds adversarial examples with smaller perturbations and within far few attempts than prior approaches.
	
	\item A new technique for improving the accuracy of code summarization models, which enables code2seq to achieve the state-of-the-art results on Java-med, and Java-large.	
\end{itemize}

\section{Overview}
\label{sec:over}

In this section, we present an overview of our approach to formalizing patterns learned by code summarization models.

\subsection{An Illustrative Example}
First, we introduce a code summarization model and two input methods as our running example for illustrating the key idea and high-level steps of our approach.

\vspace*{4pt}
\noindent
\textbf{\textit{Model.}}\, We use code2seq, the state-of-the-art code summarization model on Java-med and Java-large. Like many other DNN-based models, code2seq strives for learning precise vectorial representations for source code. Such vectors, commonly known as program embeddings, capture the semantics of a program through their numerical components such that programs denoting similar semantics will be located in close proximity to one another in the vector space. At high level, code2seq adopts a generative approach for method name prediction. It employs a standard encoder-decoder architecture~\cite{devlin2014fast, cho-cho2014learning} in which the encoder first embeds the ASTs of input methods into vectors, then the decoder uses the vectors to generate method names as sequences of words (\eg \texttt{reverse}, and \texttt{array} as a prediction in Figure~\ref{fig:met}).

\begin{figure*}[thb!]
	\centering
	\begin{subfigure}{0.472\textwidth}
		\lstset{style=mystyle}
		\lstinputlisting[linewidth=6.4cm,morekeywords={var, public, String}]{code/savebit1.cs}
		\caption{}
		\label{fig:exasavebit}
	\end{subfigure}
	\begin{subfigure}{0.475\textwidth}
		\lstset{style=mystyle}
		\lstinputlisting[linewidth=7.25cm,morekeywords={var, public, String}]{code/savebit2.cs}
		\caption{}
		\label{fig:newsavebit}
	\end{subfigure}
	\caption{Two methods with the name \texttt{saveBitmapToFile}.}
	\label{fig:exmethods}
\end{figure*}

\vspace*{4pt}
\noindent
\textbf{\textit{Input Methods.}}\, Figure~\ref{fig:exasavebit} and~\ref{fig:newsavebit} depict two Java methods with the name \texttt{saveBitmapToFile}, which are extracted from the training set of Java-large. The distinguishing feature of this label is the \texttt{compress} API under the \texttt{Bitmap} class, which both methods have highlighted in the shadow box. 
%The represent the two mainstream implementations of reversing arrays in Java: one allocates a new array and the other swaps elements in the given array. Many others can be considered as their syntactic variants. 
Worth noting that neither code2vec nor code2seq requires input programs to compile so long as they satisfy the syntactic grammar of the language they are written in. Also, both code2vec and code2seq only take individual methods as input. When another method is invoked in the body of the input method, no inter-procedure analysis is performed, neither is inlining. Given the methods in Figure~\ref{fig:exasavebit} and~\ref{fig:newsavebit}, code2seq gives the correct predictions for both of them. Hereinafter, when referring to the inputs of code2vec or code2seq, we use programs and methods interchangeably.  

\subsection{Overview of ``Abstract, Mutate, Concretize, and Summarize''}
We now give an overview of our technique which consists of four major steps: abstract, mutate, concretize, and summarize.

%\vspace*{4pt}
%\noindent
%\textbf{\textit{Abstract.}}\, 
\subsubsection{Abstract}
While DNN-based models have been gaining increasing popularity in the programming domain,~\citet{wang2019coset} cautioned they are notably unstable with their predictions. Simple, natural, semantically-preserving transformations frequently cause models to change their predictions. Figure~\ref{fig:fac} depicts an example in which the original method (\ref{fig:before}) is correctly predicted to be \texttt{factorial} by code2vec, and the transformed method (\ref{fig:after}), albeit semantically equivalent, is totally mishandled. None of the top-5 predictions even remotely resembles the ground truth considering that we only swapped the operands of the multiplication.

%, the expression highlighted in the shadow box. , the expression highlighted in the shadow box. , the expression highlighted in the shadow box. 

\begin{figure*}[thb!]
	\centering
	\begin{subfigure}{0.475\textwidth}
		\begin{subfigure}{0.5\textwidth}
			\lstset{style=mystyle}
			\lstinputlisting[linewidth=3.2cm,morekeywords={var, public, String}]{code/facbef.cs}
		\end{subfigure}
		\begin{subfigure}{0.5\textwidth}
			\centering
			\includegraphics[width=\textwidth]{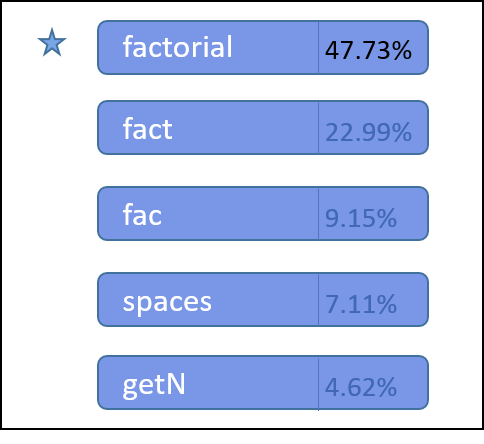}
		\end{subfigure}	
		\caption{Prediction for the original method.}
		\label{fig:before}
	\end{subfigure}
	\,
	\begin{subfigure}{0.475\textwidth}
		\begin{subfigure}{0.5\textwidth}
			\lstset{style=mystyle}
			\lstinputlisting[linewidth=3.2cm,morekeywords={var, public, String}]{code/facaft.cs}
		\end{subfigure}
		\begin{subfigure}{0.5\textwidth}
			\centering
			\includegraphics[width=\textwidth]{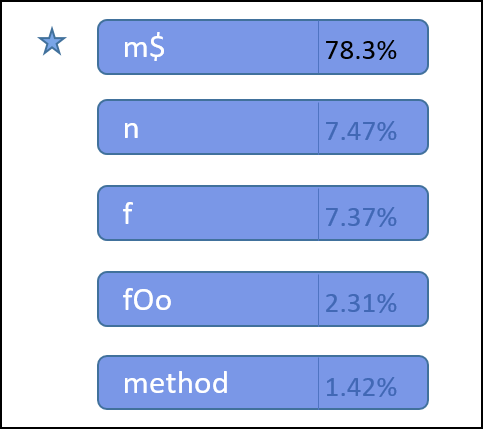}
		\end{subfigure}	
		\caption{Prediction for the transformed method.}
		\label{fig:after}	
	\end{subfigure}
	\caption{A simple, natural, semantically-preserving transformation causes code2vec to change its prediction. Note that the probability of the top-1 prediction is even higher on the transformed method.}
	\label{fig:fac}
\end{figure*}

Their finding suggests that models don't evenly distribute their attention across the entire structure of the method, instead, they focus on a small, local window of code for making predictions. To validate this hypothesis, we aim to find the window of code which models predominately based their predictions on. A simple idea is to exhaust all subsets of the statements in a given method to find 
%systematically abstract away the statements (or expressions) that does not influence the model predictions until a 
the minimal subset for which models make the same prediction as they do for the original program. However, a challenge arises: since the number of subsets to be traversed grows exponentially with the size of the method, how does it effectively scale to methods which consist of a large number of statements. We defer a detailed discussion on how to overcome this challenge in Section~\ref{subsubsec:abs}.

%this constraint alone does not necessarily imply the remaining statements (\ie out of the minimal subset) are of little relevance to the original predictions that models made. Because the minimal subset of statements could have emitted a strong enough signal. To this end, we impose another constraint, that is, no subset of the remaining statements lead models to the predictions they made for the original program. When both constraints are satisfied, we have found the window of code, the part of a method which is mostly responsible for the original predictions. \wyc{I think the above two sentences are not very related to "abstract", is it ok to move it to section 2.2 instead of section 2.2.1?}

%restoring statements that were previously removed \textit{can not} prevent models from predicting a different label when the minimal program is further reduced in any shape or form. To search a minimal program that satisfies both constraints, we adopt a classical program reduction technique --- Delta Debugging~\cite{ddpro} --- to abstract away the infringe statements in the original method.

Figure~\ref{fig:exmethodsseed} depicts the minimal programs we discovered for the methods in Figure~\ref{fig:exasavebit} and~\ref{fig:newsavebit}. Three points we intend to emphasize. First, (1) no statement in either seed reflects the name of the methods, \texttt{bos.close()}, albeit indicating a high probability of file operations, do not represent ``\texttt{save} \texttt{to} \texttt{file}'', neither does it connect to ``\texttt{bitmap}''. The rest are the log/display APIs which are completely irrelevant. The reason that code2seq still predicts the label \texttt{saveBitmapToFile} for both methods is the parameters provided in the method headers since changing the class name \texttt{Bitmap} to \texttt{Image} leads code2seq to predict a different label for both methods. Second, (2) the distinguishing features (\ie \texttt{compress} API under the \texttt{Bitmap} class) are absent in both seeds, casting serious doubts on code2seq about what it has learned. Finally, (3) Despite their irrelevance, all statements in both seeds are necessary in keeping code2seq's prediction. This suggests that code2seq takes into account the syntactic structure of methods --- two consecutive method invocations in the case of Figure~\ref{fig:exasavebitseed} --- to make prediction.

\begin{figure*}[thb!]
	\centering
	\begin{subfigure}{0.37\textwidth}
		\lstset{style=mystyle}
		\lstinputlisting[linewidth=5.6cm,morekeywords={var, public, String}]{code/savebit1seed.cs}
		\caption{}
		\label{fig:exasavebitseed}
	\end{subfigure}
\,\,
	\begin{subfigure}{0.39\textwidth}
		\lstset{style=mystyle}
		\lstinputlisting[linewidth=6.3cm,morekeywords={var, public, String}]{code/savebit2seed.cs}
		\caption{}
		\label{fig:newsavebitseed}
	\end{subfigure}
	\caption{(\subref{fig:exasavebitseed})/(\subref{fig:newsavebitseed}) depicts the seeds for the method in Figure~\ref{fig:exasavebit}/\ref{fig:newsavebit}.}
	\label{fig:exmethodsseed}
\end{figure*}

\subsubsection{Mutate}
It is well-known that machine learning models don't just create a mapping from input data to predicted labels through rote memorization but rather they discover patterns from data which generalize to that of similar characteristics. Based on this knowledge, we intend to find additional programs that are similar to the seeds which otherwise can't be obtained through a pure program abstraction approach. 
%For mutants to be considered as legitimate seeds, they too must satisfy the two constraints defined earlier. 
When mutating a seed, we modify its AST with the standard tree-edit operations (\ie node insertion, deletion and renaming), and ensure the resulted mutants also abide by the syntactic grammar of the language. 
%We restrict mutations to be node renaming operations (\ie changing the label of a node) 
%on the AST of a seed program. 
%For terminals of identifiers, this means changing the name of a token to a random word. As for constant, we change it to a different value of the same type. We change the label of a non-terminal node only if the resultant node does not induce additional children nodes which violates the minimality of a seed. For example, changing a binary expression \texttt{a+b} to a member access expression \texttt{a.b} or an element access expression \texttt{a[b]} is a valid mutation but not to an incrementor \texttt{a.b++} or a method invocation \texttt{a(b,0)}. 
Figure~\ref{fig:mutants} shows two mutants among many we have discovered which also preserve the predictions code2seq made for the original programs. 

\begin{figure*}[thb!]
	\centering
	\begin{subfigure}{0.46\textwidth}
		\lstset{style=mystyle}
		\lstinputlisting[linewidth=6.7cm,morekeywords={var, public, String}]{code/savebit1mutant.cs}
	\caption{}
	\label{fig:mutanta}		
	\end{subfigure}
\,\,
	\begin{subfigure}{0.45\textwidth}
		\lstset{style=mystyle}
		\lstinputlisting[linewidth=
		6.3cm,morekeywords={var, public, String}]{code/savebit2mutant.cs}
	\caption{}
	\label{fig:mutantb}		
	\end{subfigure}
	\caption{Two mutants we discovered among many which also preserve the predicted label \texttt{saveBitmapToFile}.}
	\label{fig:mutants}
\end{figure*}

We convey two takeaways. First, the value of individual tokens has a heavy influence on code2seq, in both mutants, the two string constants have to be present for code2seq to predict \texttt{saveBitmapToFile}. 
%The variable \texttt{isCompress} in the \texttt{if} condition also has to be kept.
% a stronger signal to code2seq than the syntactic properties of methods. We have found many mutants with the keyword \texttt{XXXXXX} that also lead code2seq to the correct predictions. 
%\ke{to continue}
%First, contrary to the typical semantic analysis in programming language, code2vec incorporates the naming of identifiers into the prediction process. For example, variable \texttt{XXX}, \texttt{XXX} in Figure~\ref{fig:newarrseed} and variable \texttt{XXX}, \texttt{XXX} in Figure~\ref{fig:swapseed} can not be renamed without changing code2vec's prediction. On the other hand, code2vec also shows occasional invariance to variable names such as \texttt{XXX}, \texttt{XXX} in Figure~\ref{fig:newarrseed} and \texttt{XXX}, \texttt{XXX} in Figure~\ref{fig:swapseed}. 
Second, as explained above, code2seq cares about the syntactic structure that methods exhibit. For example, in Figure~\ref{fig:mutantb}, even with the presence of \texttt{"save to file succeeded"}, the major part of the pattern code2seq looks for, it has to be passed as a parameter into a method call. Any other operation on the string (\eg \texttt{String _var_ = ``save to file succeeded''}) 
%such as an assignment or declaration statement \wyc{may be only keep "an assignment" and remove "declaration statement"}
%the method call \texttt{SnackbarUtil.showShort(container, "save to file succeeded")} in the seed (Figure~\ref{fig:newsavebitseed}) has to be invoked on a member access expression rather than a function identifier, in other words, mutating the method call into \texttt{_Unknown_(container, "save to file succeeded")} \wyc{I tried it on code2seq and the label is unchanged.} 
will lead code2seq to predict a different label. 

%information for making predictions. The nested binary expression \texttt{array.Length - V - C} exhibited by both seed is an evident example where altering the type of either node (representing the binary expression) will lead code2vec to predict a different label. However, code2vec also displays the insensitivity to the structure of code because whether the new seed is a method invocation (Figure~\ref{fig:mutanta}) or an assignment statement (Figure~\ref{fig:mutantb}) makes no difference to cod2vec.

%\vspace*{4pt}
%\noindent
%\textbf{\textit{Concretize.}}\, 

\subsubsection{Concretize}
Directly using seeds and mutants as the definition of the pattern that code2seq learned for the label \texttt{saveBitmapToFile} is a faulty approach despite their dominance in the predictions that code2seq makes.
%. Granted, seeds and mutants have shown to dominate code2vec's prediction as in the case of the label \texttt{reverseArray}, 
Because no formal guarantees can be given that warrant the predicted label \texttt{saveBitmapToFile} when the seed or mutant statements are surrounded by any context of code. Therefore, we first explore the space of all full-fledged programs that are valid concretizations based on seeds and mutants, then use valid concretizations to define the pattern that code2seq learned for the label \texttt{saveBitmapToFile}. We call a concretized program valid when it preserves the predictions models made for the original program. 

To synthesize a full-fledged program, one can simply enumerate all possible statements to insert into a seed or a mutant. However, such an approach is likely to be infeasible. Because the space of programs can be enumerated by a grammar of any programming language is infinite. We show how to overcome this feasibility challenge while looking for valid concretizations in Section~\ref{subsubsec:concre}. Figure~\ref{fig:full-fledged} depicts two valid concretizations based on the mutants (Figure~\ref{fig:mutanta} and~\ref{fig:mutantb}) whose statements are highlighted in the shadow box. Apparently, both programs are written in drastically different syntax and even control flow constructs, and they do not denote the semantics of the original methods in any way, shape or form. code2seq keeps its original predictions purely because of the presence of mutant statements. The examples are convincing evidence that code2seq does not learn to represent the semantics of a method, instead, it attends to local, small syntactic features to memorize a method name.

\begin{figure*}[thb!]
	\centering
	\begin{subfigure}{0.39\textwidth}
		\lstset{style=mystyle}
		\lstinputlisting[linewidth=5.4cm,morekeywords={var, public, String}]{code/savebit1Concre.cs}
		\caption{}
	\end{subfigure}
\,\,
	\begin{subfigure}{0.46\textwidth}
		\lstset{style=mystyle}
		\lstinputlisting[linewidth=6.5cm,morekeywords={var, public, String}]{code/savebit2Concre.cs}
		\caption{}
	\end{subfigure}
	\caption{Two concretized programs for which code2seq predicts \texttt{saveBitmapToFile}.}
	\label{fig:full-fledged}
\end{figure*}

%\vspace*{4pt}
%\noindent
%\textbf{\textit{Summarize.}}\, 

\begin{figure*}[!thbp]
	% the following is from https://cs.au.dk/~amoeller/RegAut/JavaBNF.html.
	%\setlength{\grammarparsep}{20pt plus 1pt minus 1pt} % increase separation between rules
	\setlength{\grammarindent}{12em} % increase separation between LHS/RHS 

	\begin{grammar}
		<seed declaration> ::= <method header>  \{ <seed block statements>? \}
		
		<method header> ::= <method modifiers>? <result type> <method declarator> <throws>?
		
		<result type> ::=  boolean | byte | short | int | long | char | float | double | void
		
		<method modifiers> ::= <method modifier> | <method modifiers> <method modifier>
		
		<method modifier> ::= public | protected | private | static  
		
		<method declarator> ::= <identifier> ( <seed parameter> ) 
		
		<seed parameter>  ::= Bitmap bmp, 	File file
		\alt String file,  Bitmap bitmap, View container, boolean isShare
		
		<seed block statements> ::=	<block statements> <seed core>  <block statements>  
		
		<seed core> ::= <seed1 core>  | <seed2 core> 
		
		<seed1 core>  ::= <identifier>(); <block statements> <seed1 statement>  
		
		<seed1 statement> ::=  <ambiguous name>(‘‘failed to save frame’’);
		\alt <type> <identifier> = <ambiguous name>(‘‘failed to save frame’’);
		
		<seed2 core>  ::= <ambiguous name>( <identifier>, ‘‘save to file succeeded’’);
		\alt <ambiguous name>( <identifier>, ‘‘save to file failed’’); 
		
		<ambiguous name> ::= <identifier> | <identifier>.<identifier>
		
		<block statements> ::= <block statement> | <block statements> <block statement>
		
		<block statement> ::= <local variable declaration statement> | <statement>

	\end{grammar}
	
	%
	%\begin{align*}
	%		<seed \;declaration> ::=& <method\; header>  \{ <seed\; block \;statements>? \}\\
	%		<method \;header> ::=& <method \;modifiers>? <result \;type> <method \;declarator> <throws>?\\
	%		<result\; type> ::= & boolean | byte | short | int | long | char | float | double | void		\\
	%		<method \;modifiers> ::= &<method \;modifier> | <method \;modifiers> <method \;modifier>		\\
	%		<method \;modifier> ::=&public | protected | private | static  		\\
	%		<method \;declarator>::= &<identifier> ( <seed \;parameter> ) 		\\
	%		<seed \;parameter>  ::= &`Bitmap \;bmp, \;	File \;file' \\
	%	&|`String\; file, \; Bitmap \;bitmap, \;View \;container, \;boolean \;isShare'		\\
	%		<seed \;block \;statements> ::=	&<block \;statements> \\ & \;<identifier>();\\&	\;<block\; statements>\\& \;<type> <identifier> = <identifier>(``failed \;to\; save \;frame''); \\&  \;<block \;statements>  \\
	%		&| <block \;statements> \\& if (isCompress) \{	<block \;statements>  \\ & <identifier>.<identifier>( <identifier>,``save\; to\; file \;succeeded'');  \} \\ & <block \;statements>  
	%		<block \;statements> ::= &<block\; statement> | <block \;statements> <block \;statement>		\\
	%		<block \;statement> ::= &<local \;variable \;declaration \;statement> | <statement>\\
	%\end{align*}
	\caption{The context-free grammar inferred that defines the pattern code2seq learned for \texttt{saveBitmapToFile}. }
	\label{fig:grammar}
	%\vspace{-10pt}
\end{figure*}

\subsubsection{Summarize}
Finally, to define the pattern code2seq learned for the label \texttt{saveBitmapToFile}, we infer a context-free grammar to describe all valid concretizations produced in the previous step. This in fact is a grammar inference problem~\cite{biermann1972survey,stevenson2014survey} where much of the success is still limited to inferring regular grammars~\cite{oncina1992inferring}. Fortunately, our problem setting is considerably simpler. That is the program to be dealt with already abide by the context-free grammar of the language they are written in, in other words, we don't need to generate production rules from the scratch but to recycle those a compiler would have used for parsing every concretized method.
%search for the right level of generalization in a grammar hierarchy where the top entity represents the syntactic grammar of the programming language (\ie the most general grammar), and the bottom entities represent the specific production rules used to write each program in $\mathcal{S^+}$ (\ie the least general grammar). 
Specifically, we take the union of the grammars that describe each concretized method as the definition of the pattern code2seq learned for the label \texttt{saveBitmapToFile}.
%Worth noting directly taking the union of the least general grammars at the bottom of the hierarchy serves our needs since the resulted grammar describes precisely all programs in $\mathcal{S^+}$, however, we miss out on the opportunity to find a grammar of higher generality that can describe additional positives examples the previous step failed to discover. Obviously, as the generality of the grammar increases it must not describe any program in $\mathcal{S^-}$.
%At a high-level, we split the problem into two stages: intra-control flow generalization and inter-control flow generalization. The former generalizes from programs employing the same control flow structure while the latter takes the union of the grammars inferred in the first stage. 
%Figure~\ref{fig:grammar} depicts the context free grammar we inferred from the valid concretization. 

For the sake of clarity and simplicity, Figure~\ref{fig:grammar} depicts the context-free grammar that is inferred from only the concretizations of the two mutants in Figure~\ref{fig:mutants}. We use Backus–Naur form style of notation and add an extra quantifier, ``?'', which denotes zero or one occurrence of the quantified terminal or non-terminal. The key of the grammar is the production rule that describes how the non-terminal \texttt{<seed} \texttt{block} \texttt{statements>} can be replaced. Below, the replacement of non-terminal nodes \texttt{<seed1} \texttt{core>} and \texttt{<seed2} \texttt{core>} point to the specific concretization \wrt the two seeds. In particular, each production rule 
%The two choices on the right correspond to the concretization of the two seeds. 
%Given the generalization of the seed statements 
describes how the \texttt{block} \texttt{statements}, defined in the later rules, can be inserted into each seed without altering the code2seq's prediction. For the non-terminal nodes whose replacement is not defined in Figure~\ref{fig:grammar}, we reuse their production rules defined in the syntactic grammar of Java, the default input programming language of code2seq.
\section{Methodology}
\label{sec:met}

In this section, we give a detailed presentation of our approach to formalizing patterns learned by code summarization models. In particular, we describe our method ``Abstract, Mutate, Concretize, and Summarize''.
%how this formalization can be leveraged to improve the generalization of code summarization models

\subsection{Problem Definition}
Given a model $M$, a set of programs $\mathcal{P}$ (from the training set of $M$) for which $M$ predicts the label $L$. In this work, we aim to formalize the patterns $M$ learned from $P$. 
%for predicting the label $L$. 
In other words, our formalization should define the common properties of $\mathcal{P}$ which $M$ regards as a ``trademark'' for any program it predicts the label $L$. We emphasize two points: (1) $\mathcal{P}$ is extracted from $M$'s training set, the only part of a dataset from which a model learns. During inference, models only attempt to match the learned patterns in test data, therefore we don't consider the non-training programs for studying what models have learned.
%may contain exclusively programs from $M$'s training set, the complement set (\ie any program not from the training set of $M$), or a mixture of both. We say $M$ learns a pattern from training programs, and recognizes (or discovers) a pattern from non-training programs; 
(2) the label $L$ that $M$ predicts can be incorrect for some (or even all) programs in $M$, 
%may not give the correct predictions for each program in $\mathcal{P}$, 
regardless, $M$ has learned a pattern that can be formalized. In fact, we propose an example use of the pattern definitions based on the incorrect predictions $M$ made to improve its accuracy.
% must be correctly predicted by $M$, otherwise, $M$ hasn't even learned the right pattern, and therefore not worth considering.  

At a high-level, the way we define a pattern that code summarization models learned is to generalize about all programs that exhibit the pattern using context-free grammar (Definition~\ref{def:cfg}).
%In this paper, we define patterns that are learned by code summarization models using context-free grammar (Definition~\ref{def:cfg}). This definition is motivated by our observation of the behavior of code summarization models. Specifically, they base their predictions on local, small, and syntactic program properties. 
For the remainder of this section, we illustrate how to infer such a context-free grammar given a model and its training set.
%that models learned. 
%that describes the programs which exhibit the pattern to be defined.
%that specifies the common properties of programs for which models predict a particular label. 
To assist our exposition, we use the notations introduced above throughout this section.

\begin{definition}(\textit{Patterns})\label{def:cfg}
	Let $M$ be a model trained on a dataset $\mathcal{D}$. Let $\mathcal{P}$ (\st $\mathcal{P} \subsetneq \mathcal{D}$) be a set of programs for which $M$ predicts the label $L$. The patterns $M$ learned from $\mathcal{P}$ for predicting the label $L$ is a context-free grammar $G=<\!\!V,\Sigma,S,R\!\!>$ with non-terminals $V$, terminals $\Sigma$, a start symbol $S$, and production rules $R$, which specifies the common properties of the programs for which $M$ predicts the label $L$.
\end{definition}

\subsection{The Abstract, Mutate, Concretize, and Summarize Algorithm}

This section presents our pattern formalization algorithm. In particular, it describes the four key functional components: abstract, mutate, concretize and summarize.

\subsubsection{Abstract}
\label{subsubsec:abs}

The goal of this step is to identify a fragment of each method $P$ in $\mathcal{P}$ that captures the essence of the prediction a model made for $P$. We name such fragments \textit{seeds} (Definition~\ref{def:seed}) which satisfy the \textit{sufficient} and \textit{necessary} properties.

\begin{definition}(\textit{Seed})\label{def:seed}
Given a training method $P$ whose body consists of a set of statements $B$, and a model $M$ which predicts the label $L$ for $P$, another method\footnote{When processing each method, we do not change its header but the body, therefore we don't explicitly state the requirement that $\epsilon$'s header needs to be the same as $P$'s. The same thing applies to the later definitions and theorems. For simplicity, we consider the body of a method as a set of statements including the non-control flow statements and control flow statements whose bodies are replaced with empty blocks (\cf Section~\ref{subsec:setsta}).} $\!\epsilon$ with a body of statements $\hat{B}$ \st $\hat{B} \subseteq B$ is said to be a seed of $P$ \textit{iff} it is \textit{sufficient}, meaning, $M$ also predicts the label $L$ for $\epsilon$, and \textit{necessary}, meaning, there does not exist a method $\epsilon^\prime$ with a body $B^\prime$ \st $B^\prime \subsetneq \hat{B}$;
%or $B^\prime \subset B \setminus \hat{B}$;
%such that its body is a subset of a set in $\{B^\prime |\, b \in \hat{B}, B^\prime \!= B \setminus \{b\}\}$;
and $M$ also predicts the label $L$ for $\epsilon^\prime\!$.
\end{definition}

The intuition behind the sufficient property is to ensure that statements in a seed 
%single-handedly lead models to predict the same label as they do for the original program
alone lead models to the same prediction they made for the original method.
As for the necessary property, our definition implies no subset of the seed statements possesses the same capability.
%, not even that of the remaining statements in the original method after any of the seed statements is removed. 
%lead the model to the same prediction, 
%furthermore, after taking a seed statement out of the original method, no combination of the remaining statements
%among sets that are created by adding each proper subset of seed statements to the remaining statements, there does not exist a subset of any set for which the model makes the same prediction. 
In other words, removing any statement in a seed will cause models to alter their predictions.

Definition~\ref{def:seed} does not guarantee the singularity of seeds within a method. When there happen to be multiple sets of statements in the body that satisfy the sufficient and necessary properties, a method will have multiple seeds. In fact, for the program in Figure~\ref{fig:exasavebit}/\ref{fig:newsavebit}, we depict another seed of his in Figure~\ref{fig:seed1bk}/\ref{fig:seed2bk}.

\begin{figure*}[thb!]
	\centering
	\begin{subfigure}{0.48\textwidth}
		\lstset{style=mystyle}
		\lstinputlisting[linewidth=6.7cm,morekeywords={var, public, String}]{code/savebit1seedbk.cs}
		\caption{}
		\label{fig:seed1bk}		
	\end{subfigure}
	\,\,
	\begin{subfigure}{0.45\textwidth}
		\lstset{style=mystyle}
		\lstinputlisting[linewidth=6.6cm,morekeywords={var, public, String}]{code/savebit2seedbk.cs}
		\caption{}
		\label{fig:seed2bk}		
	\end{subfigure}
	\caption{Two other seeds of the programs in Figure~\ref{fig:exasavebit} and~\ref{fig:newsavebit}.}
	\label{fig:seedbks}
\end{figure*}

Note that the multiplicity of seeds does not necessarily violate our assertion that models base their predictions on a small, local window of code. A common behavior models display is they feed off the most recognizable statement while receiving enough signals from the remaining statements to arrive at the original prediction. Seeds in Figure~\ref{fig:exasavebitseed} and~\ref{fig:seed1bk} are good examples of this behavior. code2seq treats \texttt{Log.e(TAG, ``failed to save frame'', e)} as the dominant statement for feature representation, however, it still need help from other statements (\eg \texttt{bos.close()} in Figure~\ref{fig:exasavebitseed} and the instantiation of a stream class within a \texttt{try catch} clause in Figure~\ref{fig:seed1bk}) to predict the label \texttt{saveBitmapToFile}. In other words, we consider the window models look into as centering around the dominate statement, and extending across a variety of supporting statements. Another case we have found where models look into separate places in a method is when multiple statements render similar program features such as the two seeds in Figure~\ref{fig:newsavebitseed} and~\ref{fig:seed2bk} which only differ by a word in a constant string. Therefore, from the perspective of feature representation, models can be deemed as attending to the same window of code within the program.

%Because we find that 
%In theory, it is possible that a method does not have a seed, particularly, if a model keeps its original prediction on the separate sets of statements in the given method. Because no set satisfies the necessary properties. In practice, we have not encountered such a situation as there is always one and only one set of statements within a method body on which models preserve their original predictions.

To identify seeds in a given method, we can adopt a brute-force approach to exhaust all subsets of the statements in the body. The algorithm runs in exponential time, and will incur $2^{n}\!-\!2$ (where $n$ is the number of statements in the method, and we don't need to consider an empty seed or itself, hence ``$-2$'') predictions. Although the approach can cope with methods of smaller size, it is hard to scale when the number of statements in the method increases. To address the potential scalability concern,
%Algorithm~\ref{alg:seed} shows how to identify a seed in a given method. In principle, this algorithm runs in exponential time as any statement in the original method can be either in or out when selected to create the body of the seed. In total, there are $2^{n}\!-\!1$ (where $n$ is the number of statements in the body of the original method, and a seed can never be empty, hence ``$-1$'') combinations of statements in the original method. However, through our large-scale experimentation, we find that all seeds consist of less than two statements, therefore, there are at most $n+n*(n\!-\!1)/2$ possibilities one needs to consider, a number that is far fewer than $2^{n}\!-\!1$. 
we present an optimization of the brute-force approach, which runs in quadratic time in average case. The optimization is designed based on the \textit{monotonicity} property (Definition~\ref{def:mono}) of models.
%to reduce the complexity of Algorithm~\ref{alg:seed} into quadratic time in average case. 
Before we give the formal definition of monotonicity, we explain our intuition at a high level. Since the weights of models are optimized to fit the training data, they behave differently on the unseen data, the degree to which depends on the distributions from which both datasets are drawn and the capacity of the models themselves. As a concrete piece of evidence, models are always shown to be less accurate on the test set than they are on the training set of any widely-acknowledged, well-established benchmark no matter how powerful the models are or how similar the training data is to the test data.
%do not perform as well on the unseen data as they do on the training data
%inevitably their performance becomes suboptimal when evaluated on a different dataset 
%even if the two datasets exhibit similar characteristics. 
Since proving the property is out of the scope of this paper, we assume the monotonicity of models to which we have not found a violation through our large-scale experimentation.

\begin{definition}(\textit{Monotonicity})\label{def:mono}
	Given a model $M$, a method $P_0$ from the training set of $M$, a seed $\epsilon$ of $P_0$, a method $P_n$ outside of the training set of $M$, and the \textit{shortest} sequence of tree-edit operations $\delta_1,\delta_2,\dots,\delta_n$ that transforms the AST of $P_0$ into that of $P_n$, $M$ is said to be monotonic \textit{iff} it will make the same prediction for $P_{i}$ (the result of the application of $\delta_i$ on $P_{i-1}$ where $n > i \geq 1$) as it makes for $P_0$ if it makes the same prediction for $P_{j}$ (where $n \geq j > i$) as it makes for $P_0$. Similarly, $M$ will make a different prediction for $P_{i}$ (where $n \geq i > 1$) than it makes for $P_0$ if it makes a different prediction for $P_{j}$ (where $i > j \geq 1$) than it makes for $P_0$.
	%	$\in \mathcal{U}$ \st $\lvert P_j - P \rvert^T > \lvert P_i - P \rvert^T$ 
	%	if it makes a correct prediction for $P_{j}$ where either $\hat{B} \subsetneq B_{i} \wedge \hat{B} \subsetneq B_{j}$ or $B \setminus \hat{B} \subsetneq B_{i} \wedge B \setminus \hat{B} \subsetneq B_{j}$, and $\lvert P_j - P \rvert^T > \lvert P_i - P \rvert^T$. Similarly, $M$ will make an incorrect prediction for $P_{j}$ if it makes an incorrect prediction for $P_{i}$. $\lvert P_1 - P_2 \rvert^T$ denotes the tree-edit distance between the ASTs of $P_1$ and $P_2$. \ke{on the path from transforming tree a to tree b by applying the shortest edits}
\end{definition}

\begin{algorithm}[!tbp]
 	\SetKwInOut{Input}{input}
	\SetKwInOut{Output}{output}
	
	\DontPrintSemicolon
	\SetKwProg{myproc}{procedure}{}{}
	\SetKwFunction{proc}{FindSeeds}
	\SetKwProg{myfunc}{function}{}{}
	\SetKwFunction{procc}{IsAbsent}
	\SetKwFunction{proco}{IsNecessaryOptimized}
	
	\SetKwData{Sz}{size}
	\SetKwData{Md}{method}
	\SetKwData{Mb}{method.body}	
	\SetKwData{Mh}{method.head}	
	\SetKwData{Mr}{method\_header}		
	\SetKwData{Sd}{new Method(selected\_statements, method.head)}	
	\SetKwData{Sr}{new Method(remaining\_statements, method.head)}		
	\SetKwData{Ml}{model}
	\SetKwData{Lb}{label}
	\SetKwData{Pd}{prediction}	
	\SetKwData{K}{k}	
	\SetKwData{St}{selected\_statements}		
	\SetKwData{Rs}{remaining\_statements}
	\SetKwData{Tr}{traversed\_statements}
	\SetKwData{Se}{seeds}	
	\SetKwData{Ic}{is\_continue}
	\SetKwData{Nm}{m}	
		
	\SetKw{Break}{break}

	\Input{\Md, \Ml, \Lb, \Tr}
	\Output{\Se}
	\myproc{\proc{\Md, \Ml, \Lb, \Tr, \Se}}{ 
		\Begin{			
			\Sz $\gets$ \Mb.\textit{Size}\! ()\;\label{line:getSize}
			\For{\K $\gets$ 1 \textbf{to} \Sz}{
				\St $\gets$ \textit{SelectStatements}\! (\Mb, \K, \Tr)\;\label{line:select}
				
				\Pd $\gets$ \Ml.\textit{Predict}\! (\St, \Mh)\;\label{line:predict}
				
				\If{\Pd == \Lb} {
					\Se $\gets$ \Se $\cap$ \{ \Sd \}		\\	\label{line:save}			
					\tcp{`$\setminus$' denotes the set minus}
					\Rs $\gets$ \Mb $\setminus$ \St				
					
					\tcc{IsAbsent (Line~\ref{line:startO}-\ref{line:endO}), a constant time algorithm, checks the absence of additional seeds in the remaining statements}
					\If{\textit{IsAbsent}\! (\Rs, \Mh, \Ml, \Lb)}{\label{line:check}						
						\Return{}
					}
				    \tcc{when execution hits this point, more seeds are to be discovered, which is handled in Line~\ref{line:recs}-\ref{line:rece}}
					\Ic $\gets$ $\mathit{True}$ \\
					\Break \label{line:break}				
				}
			    \Else {
			    	\Tr $\gets$ \Tr $\cap$ \{ \St \} 		 \label{line:quest}   	
				}			
			}
		
		    \If{\Ic} {
				\Nm $\gets$ \Sr  \\				\label{line:recs}
				\Return{\proc{\Nm, \Ml, \Lb, \Tr, \Se}}\label{line:rece}
		    }
		}
	}
	
%	\myfunc{\procc{\Rs, \Mr, \Ml, \Lb}}{ 
%		\Begin{
%			\For{\K $\gets$ 1 \textbf{to} \Rs.\textit{Size}\! ()}{
%				\St $\gets$ \textit{SelectStatements}\! (\Rs, \K)\;\label{line:selectC}
%				
%				\Pd $\gets$ \Ml.\textit{Predict}\! (\St, \Mr)\;\label{line:predictC}
%				
%				\If{\Pd == \Lb} {
%					\Return{False}			
%				}			
%			}	
%			\Return{True}		
%		}
%	}
	\myfunc{\procc{\Rs, \Mr, \Ml, \Lb}}{ \label{line:startO}
		\Begin{
			\Pd $\gets$ \Ml.\textit{Predict}\! (\Rs, \Mr)\;\label{line:predictOC}
			
			\If{\Pd $!\!=$ \Lb} {
				\Return{True}			
			}			
			\Else {
				\Return{False}\label{line:endO}
			} 
		}	
	}
	\caption{Find seed in a given method}
	\label{alg:seed}
\end{algorithm}

Algorithm~\ref{alg:seed} shows how to find seeds in a given method. The key is to avoid searching in a set of statements where there does not exist any seed. Whenever the \texttt{selected\_statements} is found to satisfy the sufficient property, we save it as a seed (Line~\ref{line:save}), and validate the absence of seeds in the remaining statements in the method (Line~\ref{line:check}). Thanks to the monotonicity property, the absence of seeds can be validated in constant time. That is, if the entire set of the remaining statements is not capable of leading models to the predictions they make for the original methods, none of its subsets will, hence need not be checked, in which case, the algorithm returns and its output can be accessed through \texttt{seeds}. However, if the remaining statements present more seeds, we break out of the current loop (Line~\ref{line:break}), and continue to search the additional seeds by recursively calling itself with a new method composed of the remaining statements (Line~\ref{line:recs}-\ref{line:rece}). To avoid re-attempting the same statements, we save the previous attempts into \texttt{traversed\_statements} (Line~\ref{line:quest}) which will be skipped in our quest for additional seeds in the future (Line~\ref{line:select}). We do not explicitly check the necessary property because all discovered seeds are by construction minimal. In terms of the running time, Algorithm~\ref{alg:seed} is guaranteed to find all seeds by traversing no greater than $\sum_{k=1}^{m} \Mycomb[n]{k}$ sets of statements where $n$ is the number of statements in a given method, $m$ is the number of statements in the largest seed of the method, and $\Mycomb[n]{k}$ denotes the number of combinations for $k$ objects selected out of $n$.  Since the vast majority of seeds consist of no greater than two statements, Algorithm~\ref{alg:seed} runs in quadratic time in average case.

\subsubsection{Mutate}

In this step, we set out to find similar programs to seeds which also exhibit the pattern that $M$ looks for when predicting the label $L$. In particular, we mutate the AST of each seed to produce new programs through the standard tree-edit operations: adding nodes, removing nodes, and renaming nodes. Algorithm~\ref{alg:mutant} gives the details in computing the set of all mutants according to Definition~\ref{def:mutant}. We defer the discussion on the motivation of Definition~\ref{def:mutant} to Section~\ref{subsubsec:concre}.

\begin{definition}(\textit{Mutant})\label{def:mutant}
	Given a training method $P$ (with a body $B$), its seed $\epsilon$ (with a body $\hat{B}$), and a model $M$, a method $\omega$ whose body $\bar{B}$ is a variant of $\hat{B}$ (obtained by modifying the AST of one or more statements in $\hat{B}$ with the tree-edit operations), is said to be a mutant of $\epsilon$ \textit{iff} $M$ makes the same prediction for $\epsilon$ and $\omega^\prime$ such that $\omega^\prime$'s body is $B \setminus \hat{B} \cup \bar
	{B}$. $\omega$ is said to be the weakest mutant \textit{iff} any change on $\omega$ that causes itself to be even further away from $\epsilon$ will lead $M$ to predict a different label than $M$ predicts for $\epsilon$. In contrast, $\omega$ is said to be the strongest mutant \textit{iff} it alone keeps the predictions that models made for $\epsilon$ (exemplified by the programs in Figure~\ref{fig:mutants}).
%	 The tree-edit distance between $\epsilon$ and $\tilde{\omega}$ is denoted by $\theta$, formally $\theta = \lvert \epsilon - \tilde{\omega} \rvert^T$.\ke{Strongest mutant?}
\end{definition}

Algorithm~\ref{alg:mutant} adopts an iterative approach to find mutants through a set of seeds. In each iteration, we enumerate the minimal edits to transform a mutant discovered in the previous iteration (Line~\ref{line:enu}). By minimality, we mean applying smaller edits to the corresponding mutant results in a syntactically invalid program (\eg the minimal tree-edit operations needed to transform the AST of \texttt{a+b} into that of \texttt{a[b]} is two). For the very first iteration, we transform the seeds given as input. Note that we only consider modifications when enumerating the edits for each mutant because deletions are guaranteed to yield invalid mutants according to the definition of seed; insertions are by nature not related to seed modification, and will be handled in the next step. We then check the validity of each transformed program according to the definition of mutant (Line~\ref{line:checks}-\ref{line:checke}), if the transformed program passes the validity check, we save it as a mutant, and prepare it for further modification in the later iterations (Line~\ref{line:valids}-\ref{line:valide}). On the other hand, if the transformed program failed the check, it will be permanently discarded. Because any further modification of the program will not lead to valid mutants according to the monotonicity property. The iteration stops when none of the transformed programs passes the validity check, in which case we have found all mutants.

\begin{algorithm}[!tbp]
	\SetKwInOut{Input}{input}
	\SetKwInOut{Output}{output}
	
	\DontPrintSemicolon
	\SetKwProg{myproc}{procedure}{}{}
	\SetKwFunction{proc}{FindMutants}
	\SetKwProg{myfunc}{function}{}{}
	
	\SetKwData{Md}{method}
	\SetKwData{Mb}{method.body}	
	\SetKwData{Mh}{method.head}	
	\SetKwData{Mr}{method\_header}		
	\SetKwData{Sr}{new Method(remaining\_statements, method.head)}		
	\SetKwData{Ml}{model}
	\SetKwData{Lb}{label}
	\SetKwData{Pd}{prediction}	
	\SetKwData{K}{k}	
	\SetKwData{St}{selected\_statements}		
	\SetKwData{Rs}{remaining\_statements}
	\SetKwData{Tr}{traversed\_statements}
	\SetKwData{Se}{seeds}	
	\SetKwData{Sd}{seed}		
	\SetKwData{Ic}{is\_continue}
	\SetKwData{Nm}{m}	
	\SetKwData{Mu}{mutants}	
	\SetKwData{Fu}{latest\_mutants}		
	\SetKwData{Tm}{transformed\_seeds}		
	\SetKwData{Td}{transformed\_seed}		
	\SetKwData{Tb}{transformed\_seed.body}
	
	\SetKw{Break}{break}
	
	\Input{\Md, \Ml, \Lb, \Se}
	\Output{\Mu}
	\myproc{\proc{\Md, \Ml, \Lb, \Se, \Mu}}{ 
		\Begin{			
			\Fu $\gets$ $\varnothing$ \; 
			
			\ForEach{\Sd  $\in$ \Se}{
				\Tm $\gets$ \textit{EnumerateMinimalValidEdits}\! (\Sd)\; \label{line:enu}
				\ForEach{\Td $\in$ \Tm}{
					\Rs $\gets$ \Md $\setminus$ \Sd	\;  \label{line:checks}
					\Tb $\gets$ \Tb $\cap$ \Rs \;
					
					\Pd $\gets$ \Ml.\textit{Predict}\! (\Td)\;
					\If{\Pd == \Lb} {				    \label{line:checke}	  
						\Fu $\gets$ \Fu $\cap$ \{ \Td \} \;		 \label{line:valids}	
						\Mu $\gets$ \Mu $\cap$ \{ \Td \}	\label{line:valide}	 		    		     
					}
				}
			}
			\If{\Fu $!\!=$ $\varnothing$} {
				\proc{\Md, \Ml, \Lb, \Fu, \Mu}	 		 
			}
		}
	}
	\caption{Find mutants through a set of seeds}
	\label{alg:mutant}
\end{algorithm}

%As explained in Section~\ref{sec:over}, code summarization models only require input programs to satisfy the syntactic grammar of the language they are written in. Therefore, we enumerate all the valid edits for the AST of each seed. Given the infinite amount of grammatically-valid changes, 

\vspace*{4pt}
\noindent
\textbf{\textit{Changing Identifiers and Constants.}}\, 
%When mutating the AST of each seed, we ensure the resultant mutants satisfy the syntactic grammar. 
Regarding the renaming operations for terminal nodes of identifiers
% Therefore, when renaming the terminal nodes in the AST, we can only modify the value (not the type) of a node. For example, a terminal node which represents \texttt{=} in an assignment statement can only be changed to other assignment operators such as \texttt{+=}, \texttt{-=}, \texttt{*=}, \etc Regarding terminal nodes for 
% 
(\eg variables, types, or methods names),
% their mutation enjoys far greater flexibility, in particular, 
they can be changed to any value that respects the lexical grammar (\eg keywords that are reserved for the language can not be used to name variables).
%because no variable (or method) declaration is needed. 
To fully explore the search space, we consider both swapping identifiers within a given method and changing them to 
%Therefore, we only consider changing variable (or method) names to 
words that do not appear in the given method. As for constants, we apply the same trick without incurring type errors (\eg \texttt{int a = `3'}). 
 
\subsubsection{Concretize}
\label{subsubsec:concre}

Even though the seeds and mutants are mostly responsible for the predictions models make, it is impromptu to take their properties as the definition of a pattern that models have learned. Because statements in the seeds (or mutants) do not guarantee to preserve the predictions models made for the original programs, considering that their surrounding context can be composed of arbitrary code.
%For example, if the seed statement depicted in Figure~\ref{fig:newsavebitseed} is added into a method with 
%along with a whole sorting routine, it is unreasonable to expect any code summarization model to predict the label \texttt{saveBitmapToFile}. 
Therefore, the goal of this step is to find the space of programs anchored by the seed (or mutant) statements that keeps the predictions that models made for the original program.

\vspace*{4pt}
\noindent
\textbf{\textit{Concretize with Seeds.}}\, As discussed in Section~\ref{subsubsec:abs}, due to the nature of the mechanism in which models get trained, the discrepancy in their performance on the training and test data is inevitable. The closer an unseen sample is to a training example, the higher probabilities it will get the same predicted label as the training example.
%models display the optimal performance on the data they are trained on. Test data, albeit drawing from the same distribution, differ from the training data in some way. Due to the nature of the training process, such discrepancies inevitably lead models to a lower accuracy on the test data.
% stated in Definition~\ref{def:mono} (monotonicity), models give the most accurate prediction for the programs they are trained on. Their accuracy drops when predicting an unseen program even if it is a slight variation of a training program. 
%%Varying a training program will result in a decreased model accuracy. 
%Given this monotonicity behavior that models display, 
Based on this property, we introduce our technique to concretize a seed below. 

We simply restore the removed statements in the original method, and incrementally edit the restored statements (while leaving the seed statements intact) until the resultant method of the edits no longer keeps the prediction that models made for the seed. Our intuition is 
%to find the threshold program beyond which no programs need to be considered
to quantify the space of all possible concretizations using a set of \textit{closed intervals}, each of which measures how far a concretized method is from the original method along a trajectory of edits\footnote{Like mutating seeds into mutants in Algorithm~\ref{alg:mutant}, there are multiple trajectories of edits to change non-seed statements during the concretization of a seed.}.
%the tree-edit distance between the AST of the original method and a concretized method. 
%(\ke{exclusive to both ends}) 
At the higher end of each interval is the closest concretized method (\ie the original method itself),  
%that is the closest to the original method --- the original method itself --- 
which yields the widest margin for potential modifications. At the lower end is the furthest concretized method (called threshold method), albeit still underpinned by the seed statements, already exhausted the budget with its accumulated edits to the non-seed statements. In other words, any change that made itself even further from the original method will cause models to alter the prediction they made for the seed. By monotonicity, any program that lies between the two ends of each interval is a valid concretization. Recall the context-free grammar inferred for the illustrative example in Section~\ref{sec:over}, we avoid constraining the recursion depth of the production rule for the clarity of our presentation. In reality, depth of three gives a good approximation of the lower end of each interval regarding the concretization of the seeds and mutants in Figure~\ref{fig:exmethodsseed} and~\ref{fig:mutants}. Finally, as a verification mechanism, we confirm the validity of each concretized method by passing them to models for their predicted labels. A concretized method will only be kept when it lead models to the prediction they made for the original program.

Technically, when varying the non-seed statements, we first keep the control flow structure of the original method, and only modifies the non-control statements or expressions in the control construct.
%We do not attempt to delete a non-seed statement since the resulted concretization is guaranteed to hold according to the monotonicity property. 
Later, when switching to a different control flow structure, we reset the higher end of each interval to be $Q$, a program that is the closest to $P$ among all that employ the new control flow structure $\upsilon$. Formally, we define $Q$ in Equation~\ref{equ:ncl}:
\vspace{-3pt}
\begin{equation}\label{equ:ncl}
\begin{split}
Q & = \argmin_{W \in \mathcal{W}} \lvert P - W \rvert^{T} \\ 
\mathcal{W} & = \{W\, |\, \mathit{CF}(W)=\upsilon, \mathit{NCF}(W) \subseteq \mathit{NCF}(P) \cap \{\varnothing_s\} \}
\end{split}
\end{equation}
where $P$ is the original program from which the seed is derived. \textit{CF(W)} returns the control flow structure of $W$. \textit{NCF(W)} returns the set of non-control statements of $W$. The reason we consider $\{\varnothing_s\}$, a set of an empty statement, is to allow control statements with empty bodies when existing non-control statements are running out (\eg number of control statements is greater than the that of non-control statements). Similarly, the margin $Q$ induces will be gradually consumed by the changes we apply to $Q$. Like before, the enumeration terminates when $Q$ no longer preserves the prediction models made for the seed, and we deem programs between the two ends of each interval as the concretization of the seeds with different control flow structures.

\vspace*{4pt}
\noindent
\textbf{\textit{Concretize with Mutants.}}\, In the aforementioned approach, seed statements serve as the anchor in the method, and they are kept intact throughout the concretization process. Now we discuss how to handle the situation in which seeds statements are modified. Once a seed is modified, its statements may lose the capability of keeping the original prediction models made even with the facilitation from the non-seed statements in the original method. Figure~\ref{fig:savebitchange} gives an example, in which both programs altered code2seq's prediction due to the modification on the seed statements despite the presence of all non-seed statements. For such modified seeds which fail to preserve the original prediction models made even when the non-seed statements are present, we don't consider them for subsequent concretization. Because they do not exhibit any budget for further changes that are required to concretize a modified seed according to the monotonicity property. On the other hand, the modified seeds that remain in the mix are precisely the mutants (Definition~\ref{def:mutant}), among which the one with the least budget of change is called the weakest mutant. 
%We denote the tree-edit distance between the weakest mutant and the seed by $\theta$.

\begin{figure*}[thb!]
	\centering
	\begin{subfigure}{0.43\textwidth}
		\lstset{style=mystyle}
		\lstinputlisting[linewidth=6.15cm,morekeywords={var, public, String}]{code/savebit1change.cs}
		\caption{}
		\label{fig:savebitchange1}
	\end{subfigure}
\,\,
	\begin{subfigure}{0.47\textwidth}
		\lstset{style=mystyle}
		\lstinputlisting[linewidth=7.25cm,morekeywords={var, public, String}]{code/savebit2change.cs}
		\caption{}
		\label{fig:savebitchange2}
	\end{subfigure}
	\caption{Modifying the seed statements cause code2seq to alter its predictions despite the presence of all non-seed statements. Shallow boxes highlight the changes in the seed statements.}
	\label{fig:savebitchange}
\end{figure*}

%As an example, Definition~\ref{def:seed} (Seed) states that removing a single seed statement from the body of the original method causes models to make different predictions than they make for the seed. 

To concretize each mutant into a full-blown method, we follow the same procedure in which we concretize a seed. 
%In particular, for each mutant, 
That is, we inject to the body of a mutant the non-seed statements in the original method. The resultant method, which has the largest room for potential changes, keeps being modified until models no longer preserve the prediction they made for the seeds.
%When modifying the resultant method, we apply all the tree-edit operations to its AST. 
Similar to the seed concretization, statements in a mutant will never be changed, because mutants themselves already cover all the valid changes. In addition, we also employ the verification mechanism to ensure every concretization based on the mutants is valid. 

\subsubsection{Summarize}
Finally, we infer a context-free grammar to describe precisely all concretized methods produced in the previous step. We declare this grammar to be the definition of the patterns $M$ learned from $\mathcal{P}$ \wrt the label $L$. To solve this grammar inference problem, we don't reinvent the wheel but to reuse the syntactic grammar that input programs already employ. That is, we extract the production rules that a compiler would have used for parsing each concretized method before combining them into a unified grammar. Formally, we take the union of the terminals, non-terminals, and production rules extracted from each grammar. Regarding the definition of the patterns that code2seq learned for the label \texttt{saveBitmapToFile}, our inferred grammar describes 13 distinct control flow structures and can be instantiated into hundreds of programs with different syntactic structures. In other words, seeds and mutants are capable of preserving the predicted label $\texttt{saveBitmapToFile}$ most of the time when the enumerated programs are similar in size to the original method. Through the grammar we inferred, we conclude that code2seq has not learned a semantic representation for methods named \texttt{saveBitmapToFile}, instead, they memorize the methods through small, syntactic features.

\subsection{Correctness}

We show our ``abstract, mutate, concretize, and summarize'' algorithm is correct \wrt Definition~\ref{def:cfg} (Patterns). In particular, Theorem~\ref{the:soun} and~\ref{the:com} gives the soundness and completeness proof for our algorithm.

%Strictly speaking, the behavior of a model varies with the dataset it is trained on. Therefore, when discussing the predictions a model makes, we refer to the version of the model

\begin{theorem}[Soundness]\label{the:soun}
	Given a model $M$ trained from a dataset $\mathcal{D}$,  
%	a set of programs $\mathcal{P}$ for which $M$ predicts the label $L$, 
	and a grammar $G$ that the algorithm inferred as the definition of the patterns $M$ learned from $\mathcal{D}$ for predicting a label $L$, the algorithm is said to be sound \textit{iff} $G$ does not describe any program for which $M$ does not predict the label $L$.
\end{theorem}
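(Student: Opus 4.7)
The plan is to prove soundness by showing that every program $P \in L(G)$ must be predicted as $L$ by $M$, leveraging the explicit verification that is built into each stage of the algorithm. The argument proceeds in three moves: characterize what rules of $G$ can be, trace any derivation back to a verified concretization, and then close the remaining gap by monotonicity.

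First, I would observe that $G$ is not manufactured from thin air: every production rule of $G$ was obtained by parsing some concretized method that the algorithm explicitly submitted to $M$ and confirmed to be predicted as $L$. Let $\mathcal{C}$ denote this set of verified concretizations. Additionally, each seed $\epsilon$ used to anchor a derivation of $G$ satisfies $M(\epsilon) = L$ (sufficient property in Definition~\ref{def:seed}), and each mutant $\omega$ that survived Algorithm~\ref{alg:mutant} was likewise verified (via the predicate check at Line~\ref{line:checke}). Hence every ``anchor'' that can appear at the start of a derivation, and every context rule that can extend a derivation, is backed by at least one method $C \in \mathcal{C}$ with $M(C) = L$.

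Next, I would take an arbitrary $P \in L(G)$ and decompose its derivation tree into (i) a core subtree rooted at a \emph{seed core} non-terminal, which produces a seed or mutant, and (ii) the surrounding \emph{block statements} subtrees that concretize the core. By the previous step, each of these subtrees is drawn from the derivations witnessed by some $C_i \in \mathcal{C}$. I would then construct an edit sequence from a concrete $C \in \mathcal{C}$ sharing the same core to $P$, where each intermediate program is syntactically valid and also derivable in $G$; the existence of such a sequence is guaranteed by the fact that every block-statement rule used in $P$ was itself used in some verified concretization anchored by the same core.

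Finally, I would invoke the monotonicity property (Definition~\ref{def:mono}): the original training method $P_0$ from which the seed was abstracted, and the verified concretization $C$, both receive the prediction $L$; by monotonicity, every program lying on the shortest edit path between them — and in particular $P$ itself — must also be predicted as $L$, so $P \notin \{P' : M(P') \neq L\}$, which is precisely the soundness statement. The main obstacle, and the delicate place where I expect to spend the most care, is Step~3: because $G$ is formed by a union of production rules, $L(G)$ can in principle describe combinations that were never individually submitted to $M$, so the correspondence between grammar derivations and verified concretizations must be tight enough that every such combination fits into a monotonicity-compatible edit ladder bounded by already-verified endpoints. Getting the structural invariant right — that every rule used in $P$ traces back to some $C \in \mathcal{C}$ sharing $P$'s anchor and preserving the label along the path — is what makes the soundness argument go through.
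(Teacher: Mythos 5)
Your proposal takes a genuinely different, and considerably more elaborate, route than the paper. The paper's proof is a two-line ``by construction'' argument: every concretization of a seed or mutant is verified by querying $M$, and the inferred grammar is asserted to describe \emph{precisely} the set of valid concretizations, so no unverified program can be derived. You instead decompose an arbitrary derivation of $G$ into an anchor (seed or mutant core) plus surrounding block-statement rules, trace each rule back to some verified concretization, and then invoke monotonicity (Definition~\ref{def:mono}) to cover programs that $G$ derives but that were never individually submitted to $M$. What your approach buys is an honest confrontation with the real issue: a grammar formed by unioning production rules extracted from many parses can derive strings that no single verified parse produced, so ``describes precisely all the valid concretizations'' is the load-bearing and non-obvious assertion in the paper's argument. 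Your proof attempts to discharge exactly the obligation the paper waves away.

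However, the bridge you build does not quite reach the other side. The step you yourself flag as delicate --- that every $P \in L(G)$ lies on a monotonicity-compatible edit ladder bounded by verified endpoints --- is asserted rather than established, and it is not clearly true. Monotonicity as defined applies along the \emph{shortest} edit sequence from a single training method $P_0$ to a single target, and guarantees label preservation only for programs lying on that one trajectory between the original method and its threshold method. A program assembled by mixing block-statement rules drawn from two different verified concretizations (two trajectories of the same seed, or rules contributed by different seeds that happen to share a nonterminal) need not lie on any single such trajectory, so monotonicity tells you nothing about it. You are therefore left with a dilemma: either accept the paper's premise that $L(G)$ coincides exactly with the verified set, in which case your entire apparatus is unnecessary and the paper's one-liner suffices, or reject it, in which case your monotonicity ladder leaves exactly the problematic programs uncovered. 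The proposal is more candid than the paper about where the difficulty lies, but it does not close the gap.
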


\begin{proof}
	All concretizations of seeds or mutants are verified through $M$'s predictions. The context-free grammar that the algorithm inferred describes precisely all the valid concretizations. Therefore, by construction the theorem holds.
%	Assume $G$ describes a program $\sigma$ for which $M$ does not predict the label $L$. Since $G$ is inferred to describe precisely all the concretized methods, $\sigma$ is also a concretization based on either a seed or a mutant. Regardless, the tree-edit distance between $\sigma$ and the original method from which the seed or mutant is derived is no greater than that between the respective threshold method $\rho$ and the original method (\ie $\sigma$ is on the trajectory from the original method to $\rho$). Because $\rho$ still preserves $M$'s prediction of label $L$, by monotonicity, $M$ will also predict the label $L$ for $\sigma$, which contradicts the assumption. 
\end{proof}

\begin{theorem}[Completeness]\label{the:com}
	Given a model $M$ trained from a dataset $\mathcal{D}$,  
	%	a set of programs $\mathcal{P}$ for which $M$ predicts the label $L$, 
	and a grammar $G$ that the algorithm inferred as the definition of the patterns $M$ learned from $\mathcal{D}$ for predicting a label $L$, the algorithm is said to be complete \textit{iff} $G$ describes any program for which $M$ predicts the label $L$.
\end{theorem}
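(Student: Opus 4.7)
The plan is to prove Theorem~\ref{the:com} by contradiction, leveraging the monotonicity assumption (Definition~\ref{def:mono}) at every stage. Suppose, for contradiction, that there exists a program $P^{\star}$ with $M(P^{\star}) = L$ that is not derivable from $G$.

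First, I would pick any training program $P_0 \in \mathcal{P}$ with $M(P_0) = L$ and fix a shortest tree-edit sequence $\delta_1, \ldots, \delta_n$ transforming the AST of $P_0$ into that of $P^{\star}$. Because $M(P_0) = M(P^{\star}) = L$, Definition~\ref{def:mono} forces $M(P_i) = L$ for every intermediate $P_i$, so the entire edit trajectory consists of $L$-predicting programs.

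Second, I would decompose this trajectory into two phases that mirror the algorithm: (i) edits touching statements that lie inside some seed $\epsilon$ of $P_0$ discovered by Algorithm~\ref{alg:seed} (corresponding to Mutate), and (ii) edits that insert, delete, or alter non-seed statements while leaving the seed intact (corresponding to Concretize). Monotonicity ensures that every prefix of phase (i) is itself a valid mutant, so Algorithm~\ref{alg:mutant} cannot prune it prematurely; the resulting mutant $\epsilon^{\star}$ underpinning $P^{\star}$ must therefore be enumerated. Analogously, every program lying on the phase (ii) trajectory sits inside one of the closed intervals described in Section~\ref{subsubsec:concre}, anchored by $\epsilon^{\star}$, so $P^{\star}$ itself appears as a valid concretization. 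The Summarize step takes the union of the parse-tree grammars of all enumerated concretizations, yielding $P^{\star} \in L(G)$ --- a contradiction.

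The main obstacle will be formalizing step two rigorously. In particular, the Concretize description relies on informal notions of trajectories and closed intervals, and the argument requires that the algorithm's enumeration is exhaustive across all control-flow choices, all orderings of statement insertion, and all combinations of mutant plus non-seed code. Without monotonicity the argument collapses, because a hypothetical ``prediction valley'' --- where an intermediate program on the edit trajectory predicts some $L' \neq L$ even though both endpoints predict $L$ --- would cause Algorithms~\ref{alg:seed} and~\ref{alg:mutant} to terminate a branch that should have been extended further, leaving $P^{\star}$ undiscovered. The soundness proof of Theorem~\ref{the:soun} is verification-based and does not require this assumption; completeness genuinely does, which is precisely why the paper frames the result as conditional on monotonicity.
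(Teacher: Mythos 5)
Your proposal is correct and matches the paper's proof in essence: both argue by contradiction that, under the monotonicity assumption, any program for which $M$ predicts $L$ must lie on an edit trajectory anchored at a training program that the Mutate/Concretize steps enumerate, and hence must be described by $G$. The only organizational difference is that the paper runs the monotonicity implication in the contrapositive via an explicit case split ($\sigma$ is a training program; $\sigma$ contains a seed; $\sigma$ contains a mutant; $\sigma$ contains neither), whereas you run it forward along the trajectory --- a presentational rather than substantive difference --- and you correctly pinpoint that the informality of the Concretize enumeration is where the remaining rigor (which the paper likewise elides) would have to be supplied.
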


\begin{proof}
	Assume otherwise, there exists a program $\sigma$ such that $M$ predicts the label $L$ for $\sigma$ and $G$ does not describe $\sigma$. Regarding $\sigma$'s properties, one of the following conditions has to be met:
	\begin{enumerate}[label=(\alph*)]
		\item $\sigma$ is from the training set of $M$. Recall the concretize step in our algorithm, which uses every training program with the predicted label $L$ to enumerate the valid concretizations. In other words, all training programs are already included in the concretize step, and will be described by $G$. This contradicts the assumption. Thus, the condition is not met. 
		
		\item $\sigma$ is not from the training set of $M$.
			\begin{enumerate}[label=(\arabic*)]
				\item $\sigma$ contains the statements in a seed $\epsilon$ derived from a training program $P$. Per our assumption that $G$ does not describe $\sigma$, and thus $\sigma$ is not a valid concretization of $\epsilon$, which means $\sigma$ is not on any trajectory of edits that changes $P$ to a threshold method. By monotonicity, $M$ does not predict the label $L$ for $\sigma$, hence contradicting the assumption. 
				
				\item $\sigma$ contains the statements in a mutant derived from a seed. With the same strategy as it is adopted in (1), this condition can also be refuted.
				
				\item $\sigma$ does not contain the statements in a seed, neither does it contain the statements in a mutant. As discussed in the concretize step, $\sigma$ will not lead $M$ to predict the label $L$. The condition is also false.
			\end{enumerate}		
	\end{enumerate}	
	Since none of the conditions above can be satisfied, it can be inferred that $\sigma$ does not exist. 
\end{proof}
\section{Example Uses of the Pattern Definition}

In this section, we show the practical implications of our pattern definitions: a new method for evaluating the robustness, and a new technique to improve the accuracy of code summarization models.

\subsection{Evaluating Robustness of Code Summarization Models}
The robustness of a model refers to the reliability of the predictions it makes, especially on adversarial examples, a 
special type of data created by systematically perturbing the model inputs.~\citet{szegedy2013intriguing} is the first to discover the existence of adversarial examples in the image classification domain: visually indistinguishable perturbations cause models to alter their predictions made for the original image. Existing approaches to evaluating the robustness of a model can be classified into two categories: formally verifying a lower bound $B_l$~\cite{raghunathan2018certified,gehr2018ai2,symbolicIn} (\ie predictions are guaranteed to hold on perturbations no greater than $B_l$ \wrt some distance metric such as $L_0$, $L_1$ or $L_\infty$), or constructing attacks to demonstrate an upper bound $B_u$~\cite{goodi,CWattack} (\ie perturbations no smaller than $B_{u}$ are sufficient to make models alter their predictions). 

Wang and Christodorescu's method falls into the latter, in which they create adversarial examples by applying semantically-preserving transformations to the original programs. Interested readers are encouraged to consult the supplemental material for examples of their transformations. Despite the significant findings, their approach suffers from two issues. First, since there are many applicable transformations, and each transformation can be applied to multiple places in a given method, blindly attempting all the possibilities is quite an inefficient approach. Furthermore, their transformations often cause a fair amount of changes to the original method, thus tending to find loose bounds that do not accurately reflect the robustness of a model.

To address the weakness of their approach, we leverage the pattern definitions to pinpoint adversarial examples that demonstrate a far tighter bound than Wang and Christodorescu's method. Intuitively, since we aim to minimize the number of changes when perturbing the input methods, we only modify the seed statements, the part of the method models heavily attend to, to create adversarial examples. We introduce four semantically-preserving transformations --- variable renaming, operands swapping, API substitution, and statements reordering --- that only make minor edits to the seed of the input methods. Figure~\ref{fig:transexp} depicts an example for each transformation except variable renaming. For each column, the block at the top holds the original method and that at the bottom holds the transformed method. In comparison with Wang and Christodorescu's method, our transformations make smaller edits to the original programs.

\begin{figure*}[thb!]
	\centering
	
	\begin{subfigure}{0.29\textwidth}
		\lstset{style=mystyle}
		\begin{lstlisting}[language=C++,showstringspaces=false, morekeywords={String}]
int contains(String target,
  List<String> Items) {
  ...
  if (Items.indexOf(target)!=-1) 
  ...
}\end{lstlisting}
\end{subfigure} \vspace{-6.8pt}
\,\,
\begin{subfigure}{0.23\textwidth}
	\lstset{style=mystyle}
	\begin{lstlisting}[language=C++,showstringspaces=false]
int addChildRequest(long 
  duration) {...
  childDurationsSum =
   	childHits + duration;
  ...
}\end{lstlisting}
\end{subfigure} 
\,\,
\begin{subfigure}{0.28\textwidth}
	\lstset{style=mystyle}
	\begin{lstlisting}[language=C++,showstringspaces=false]
void addItem(List<T> mItems) {
  ...
  mIterms.add(generateItem());
  log("Item added.");
  ...
}\end{lstlisting}
\end{subfigure} 
\begin{subfigure}{0.29\textwidth}
		\lstset{style=mystyle}
		\begin{lstlisting}[language=C++,showstringspaces=false,morekeywords={String}]
int contains(String target, 
  List<String> Items) {
  ...
  if (Items.contains(target))
  ...
}\end{lstlisting}
\caption{API substitution}
\label{fig:apisub}
\end{subfigure}
\,\,
\begin{subfigure}{0.23\textwidth}
		\lstset{style=mystyle}
		\begin{lstlisting}[language=C++,showstringspaces=false]
int addChildRequest(long 
  duration) { ...
  childDurationsSum = 
		duration + childHits;
  ...
}\end{lstlisting}
\caption{Operands swapping}
\label{fig:reorder}
\end{subfigure}
	\,\,
\begin{subfigure}{0.28\textwidth}
	\lstset{style=mystyle}
	\begin{lstlisting}[language=C++,showstringspaces=false]
void addItem(List<T> mItems) {
  ...
  log("Item added.");
  mIterms.add(generateItem());
  ...
}
\end{lstlisting}
\caption{Statements reordering}
\label{fig:swap}
\end{subfigure}	
\caption{Three semantically-preserving program transformations. Top row describes the original methods, and bottom row describes the transformed methods.}
\label{fig:transexp}
\end{figure*}

\subsection{Improving Accuracy of Code Summarization Models}

We introduce a new technique based on pattern definitions that improves the prediction accuracy of code summarization models. Our technique is built on an important observation we made about the incorrect predictions that models make. When models mistakenly mis-predict a method which has a label $A$ to have a label $B$, more often than not, the seed of the mis-predicted method is similar to the seed of another method with the label $B$, in which case models are incapable of making a clear distinction, resulted in the mis-predictions. Figure~\ref{fig:mispred} gives an example in which the two methods have similar seeds but distinct names. code2seq mis-predicted the method in Figure~\ref{fig:mispred1} to be \texttt{stopAnimation}, the name of the method in Figure~\ref{fig:mispred2}.

\begin{figure*}[thb!]
	\centering
	\begin{subfigure}{0.32\textwidth}
		\lstset{style=mystyle}
		\lstinputlisting[linewidth=4.4cm,morekeywords={var, public, String}]{code/mis2.cs}
		\caption{}
		\label{fig:mispred1}
	\end{subfigure}
\,\,
	\begin{subfigure}{0.3\textwidth}
		\lstset{style=mystyle}
		\lstinputlisting[linewidth=4.5cm,morekeywords={var, public, String}]{code/mis1.cs}
		\caption{}
		\label{fig:mispred2}
	\end{subfigure}
	\caption{code2seq mis-predicted the name of (\subref{fig:mispred1}) to be \texttt{stopAnimation}, the name of the method in (\subref{fig:mispred2}). The reason is the seed statement in (\subref{fig:mispred1}) is similar to the seed statement in (\subref{fig:mispred2}). We highlight the seed statements of both programs within the shadow box.}
	\label{fig:mispred}
\end{figure*}

Inspired by this finding, we proposed a new approach, along the line of adversarial training~\cite{goodi}, which guides models to attend to a different part of an input method when predicting its label. The hope is the new seed will not clash with any existing seed extracted from the training programs in the entire dataset. At the technical level, we create new training programs by injecting the seed of a mis-predicted method into a variety of methods (\ie with different labels) that are correctly predicted. Each freshly-created program will be assigned with the label of the correctly predicted method which happens to host the seed. Our intention is to neutralize the previous seed, which causes models to mis-predict, to a new seed, which hopefully would lead them to the ground truth.

Figure~\ref{fig:retrain} depicts an example, in which we drop the seed statements of the mis-predicted program (Figure~\ref{fig:mispred1}) into three methods named \texttt{contains}, \texttt{count}, and \texttt{indexOf} respectively. The resulted methods will keep the name of the hosting methods as highlighted in the Figure. In practice, many such instances will be added to the training set. As a result, models are forced to shift their attention on the mis-predicted program because keep using \texttt{mRunningAnimator = null} as seed will affect their accuracy on the additional training samples despite the higher weight of the hosting methods. Because seed statements, which always emit strong signals to influence models, will not be ignored, and the only way out is to neutralize them, which is the goal of the re-training.

\begin{figure*}[htb!]
	\centering
	\begin{subfigure}{0.31\textwidth}
		\lstset{style=mystyle}
		\lstinputlisting[linewidth=4.4cm,morekeywords={Object String}]{code/retra1.cs}
		\caption{}
		\label{fig:retra1}
	\end{subfigure}
	\,
	\begin{subfigure}{0.31\textwidth}
		\lstset{style=mystyle}
		\lstinputlisting[linewidth=4.4cm,morekeywords={String}]{code/retra2.cs}
		\caption{}
		\label{fig:retra2}
	\end{subfigure}
	\,
	\begin{subfigure}{0.31\textwidth}
		\lstset{style=mystyle}
		\lstinputlisting[linewidth=4.5cm,morekeywords={String}]{code/retra3.cs}
		\caption{}
		\label{fig:retra3}
	\end{subfigure}
	\caption{Additional programs that are created to re-train code summarization models.}
	\label{fig:retrain}
\end{figure*}

To select the mis-predicted programs, we target labels where models display the highest error rates to prevent them from overfitting to a few outliers for an otherwise perfectly-predicted label. Because a poor prediction accuracy on a sizable number of programs with the same label indicates the issue of underfitting, which our new technique addresses.

\section{Evaluation}

We have realized our algorithm ``abstract, mutate, concretize and summarize'' in a tool, called \tool, 
%({\underline{PAT}tens \underline{I}n \underline{C}ontext-free grammar}), 
which formalizes the patterns code summarization models learned using context-free grammar. In the first part of the evaluation, we give the details of the pattern definitions that \tool produces. For the example applications of the pattern definitions, we also evaluate the effectiveness of the method for finding adversarial examples, and the new technique for improving the prediction accuracy of code summarization models.

\subsection{Evaluation Subjects}
\vspace*{1pt}
\noindent
\textbf{\textit{Models.}}\, code2vec, code2seq, sequence GNN, LiGER are the most notable code summarization models in the literature. In the ideal case, all of them should be included in our experiments. However, we had significant difficulties in reproducing the results sequence GNN achieves~\cite{fernandes2018structured}. Our reimplementation performed significantly worse (\ie more than 10\% in F1) than theirs on the same benchmark used in their experiments. We suspect the discrepancy is caused by the inconsistent extractors which convert a method into the graph representation amendable to the model because the original extractor is the only part in their pipeline that is not open-sourced.\footnote{We have made an effort to resolve the issue with the authors of sequence GNN. Unfortunately, we could not work out a solution before the submission deadline. We have attached our email conversations in the supplemental material for reviewers' perusal.} Concerning the validity of our results on an inferior reimplementation, we regrettably exclude sequence GNN from our experiment. We also don't pick LiGER, a model that heavily depends on program executions due to its rather limited applicability and generality. Compared to code2vec and code2seq which does not even require programs to compile, LiGER requires programs to execute. For this reason, LiGER is evaluated on less than 10\% of the methods in Java-med and Java-large since the vast majority do not trigger interesting executions for LiGER to learn.

\begin{table}[!tbph]
	\captionsetup{skip=1pt}	
	\caption{Compare reimplementations to originals for code2vec and code2seq.}
	\centering
	%	\normalsize
	\adjustbox{max width=\linewidth}{
		\begin{tabular}{l|ccc|ccc|ccc}
			\hline
			\multirow{2}{*}{\tabincell{c}{Models}} & \multicolumn{3}{c|}{Java-small} & \multicolumn{3}{c|}{Java-med} & \multicolumn{3}{c}{Java-large}\\
			\cline{2-10}
			& Precision & Recall & F1 & Precision & Recall & F1 & Precision & Recall & F1  \\\hline
			code2vec    & 18.51 & 18.74 & 18.62 & 38.12 & 28.31 &32.49 & 48.15 & 38.40 & 42.73    \\\hline		
			\textbf{code2vec (reimplementation)}& \textbf{19.23} & \textbf{17.72} & \textbf{18.44} & \textbf{40.32} & \textbf{28.89} &\textbf{33.66} & \textbf{48.90} & \textbf{37.26} & \textbf{42.29} \\\hline
			code2seq    & 50.64 & 37.40 & 43.02  & 61.24 & 47.07 & 53.23 & 64.03 & 55.02 & 59.19   \\\hline
			\textbf{code2seq (reimplementation)}& \textbf{48.72} & \textbf{35.46} & \textbf{41.05}  & \textbf{61.91} & \textbf{46.38} & \textbf{53.03} & \textbf{63.78} & \textbf{54.41} & \textbf{58.72}
			\\\hline
		\end{tabular}
	}
	% \vspace{-.5cm}
	\label{tab:com}
\end{table}

\vspace*{4pt}
\noindent
\textbf{\textit{Datasets.}}\, We use Java-small, Java-med, and Java-large, three public datasets that many code summarization models used for evaluation. They are proposed by~\citet{alon2018code2seq}, which are collections of Java methods extracted from a large number of projects on GitHub. We have re-trained code2vec and code2seq using their implementations open-sourced on GitHub. Table~\ref{tab:com} shows re-trained models are comparable to the originals~\cite{alon2018code2seq}.

\subsection{What Have code2vec and code2seq Learned}
\label{subsec:setsta}

Now, we give the details about the pattern definitions \tool produced for code2vec and code2seq.

\vspace*{4pt}
\noindent
\textbf{\textit{Finding Seeds.}}\, In general, we follow Algorithm~\ref{alg:seed} to identify the seed of a given program. Regarding the control statements, we treat their bodies to be independent of the control predicates. For example, when abstracting a \texttt{if} statement, we either delete the \texttt{if} condition and keep the statements in the body or remove a non-control statement from its body. We apply this method recursively to deal with nested control constructs. Table~\ref{tab:sizeseed} depicts the size of the seed in terms of the number of tokens it is composed of for code2vec and code2seq (\eg mean and median). The number in the parenthesis denotes the percentage a seed's tokens makes up of a whole method's. Similarly, Table~\ref{tab:stromut} gives the statistics of the strongest mutants. Apparently, both models only learned local, syntactic program features as the seeds do not capture the global, semantic properties of input methods. 
%\wyc{It occurs to me that I previously want to count the number of ast node of a seed. But I think both number of ast nodes and number of tokens are ok.}

\begin{table*}[ht]
	\begin{minipage}{.49\textwidth}
		\captionsetup{skip=1pt}
		\caption{The size of the seeds.}
		\centering
		\adjustbox{max width=1\textwidth}{
			\begin{tabular}{c|cc|cc|cc}
				\hline
				\multirow{2}{*}{Models} & \multicolumn{2}{c|}{Java-small} &  \multicolumn{2}{c|}{ Java-med}  & \multicolumn{2}{c}{ Java-large}  \\
				& Mean & Median & Mean & Median & Mean & Median \\\hline
				code2vec &\tabincell{c}{14.23\\(24\%) }& \tabincell{c}{10.00\\(19\%)}  & \tabincell{c}{13.13\\(18\%)} & \tabincell{c}{9.00\\(12\%)} & \tabincell{c}{14.30\\(21\%)} & \tabincell{c}{10.00\\(21\%)}  \\\hline
				code2seq & \tabincell{c}{11.53\\(23\%)} & \tabincell{c}{9.00\\(15\%)} & \tabincell{c}{14.95\\(21\%)}  & \tabincell{c}{11.00\\(14\%)} & \tabincell{c}{18.72\\(19\%)} & \tabincell{c}{16.00\\(13\%)} \\\hline
		\end{tabular}}
		\label{tab:sizeseed}
	\end{minipage}
	\begin{minipage}{.49\textwidth}
		\captionsetup{skip=1pt}
		\caption{The size of the strongest mutants.}
		\centering
		\adjustbox{max width=1\textwidth}{
			\begin{tabular}{c|cc|cc|cc}
				\hline
				\multirow{2}{*}{Models} & \multicolumn{2}{c|}{Java-small} &  \multicolumn{2}{c|}{ Java-med}  & \multicolumn{2}{c}{ Java-large}  \\
				& Mean & Median & Mean & Median & Mean & Median \\\hline
				code2vec &\tabincell{c}{13.91\\(23\%) }& \tabincell{c}{10.00\\(19\%)}  & \tabincell{c}{12.21\\(17\%)} & \tabincell{c}{9.00\\(11\%)} & \tabincell{c}{13.70\\(20\%)} & \tabincell{c}{10.00\\(21\%)}  \\\hline
				code2seq & \tabincell{c}{12.03\\(24\%)} & \tabincell{c}{9.00\\(15\%)} & \tabincell{c}{14.32\\(21\%)}  & \tabincell{c}{11.00\\(13\%)} & \tabincell{c}{17.33\\(18\%)} & \tabincell{c}{16.00\\(13\%)} \\\hline
		\end{tabular}}
		\label{tab:stromut}
	\end{minipage}
\end{table*}

\vspace*{4pt}
\noindent
\textbf{\textit{Concretizing Seeds and Mutants.}}\, At the concretize step, we adopt the same approach to dealing with the identifiers (\eg variable, type, or method names) as we do at the mutate step. In general, concretizations of a seed or a mutant covers a wide spectrum of program structures, many of which do not resemble the semantics of the statements outside of the seed. Figure~\ref{fig:full-fledgedbks} gives two more concretizations in addition to those in Figure~\ref{fig:full-fledged}. 

\begin{figure*}[thbp!]
	\centering
	\begin{subfigure}{0.48\textwidth}
		\lstset{style=mystyle}
		\lstinputlisting[linewidth=6.7cm,morekeywords={var, public, String}]{code/savebit1seedbkConcre.cs}
		\caption{}
	\end{subfigure}
	\,\,
	\begin{subfigure}{0.45\textwidth}
		\lstset{style=mystyle}
		\lstinputlisting[linewidth=6.6cm,morekeywords={var, public, String}]{code/savebit2seedbkConcre.cs}
		\caption{}
	\end{subfigure}
	\caption{Two concretized programs for seed in Figure \ref{fig:seedbks}.}
	\label{fig:full-fledgedbks}
\end{figure*}

Table~\ref{tab:numDistCFS} depicts on the number of distinct control flow structures a concretization includes. Table~\ref{tab:numSynVar} gives the same statistics \wrt the actual program instances in terms of the syntactic variations. Evidently, both models pay little to none attention to the non-seed statements in input methods as they are regularly substituted with other drastically different statements without altering the predictions models made for the original programs. 

\begin{table*}[ht]
	\begin{minipage}{.49\textwidth}
		\captionsetup{skip=1pt}
		\caption{The \# of distinct control flow structures}
		\centering
		\adjustbox{max width=1\textwidth}{
			\begin{tabular}{c|cc|cc|cc}
				\hline
				\multirow{2}{*}{Models} & \multicolumn{2}{c|}{Java-small} &  \multicolumn{2}{c|}{ Java-med}  & \multicolumn{2}{c}{ Java-large}  \\
				& Mean & Median & Mean & Median & Mean & Median \\\hline
				code2vec & 12.31& 10 & 12.79 & 10& 9.00 & 8 \\\hline
				code2seq & 12.22 & 9 & 10.43 & 8 & 8.49 & 6 \\\hline
		\end{tabular}}
		\label{tab:numDistCFS}
	\end{minipage}
	\begin{minipage}{.49\textwidth}
		\captionsetup{skip=1pt}
		\caption{The \# of syntactic variations}
		\centering
		\adjustbox{max width=1.03\textwidth}{
			\begin{tabular}{c|cc|cc|cc}
				\hline
				\multirow{2}{*}{Models} & \multicolumn{2}{c|}{Java-small} &  \multicolumn{2}{c|}{ Java-med}  & \multicolumn{2}{c}{ Java-large}  \\
				& Mean & Median & Mean & Median & Mean & Median \\\hline
				code2vec & 246.52 & 273 & 248.17 & 260 & 336.00 & 336 \\\hline
				code2seq & 260.49 & 275 & 218.46& 230 & 207.92 & 225 \\\hline
		\end{tabular}}
		\label{tab:numSynVar}
	\end{minipage}
\end{table*}

\vspace{5pt}

\begin{mdframed}[style=default]
	{\hspace{4pt} 
		Both code2vec and code2seq predominately base their predictions on a small fraction of the input methods.}
\end{mdframed}

\vspace{8pt}

\subsection{Evaluating the Robustness of code2vec and code2seq}

We construct attacks to code2vec and code2seq based on the pattern definitions \tool produces. Given an input method, we identify its seed statements, on which we apply the aforementioned semantically-preserving transformations to look for the potential adversarial examples. If a transformed program leads a model to a different prediction than it made for the original method, an adversarial example is found, and the robustness of the model can be calculated by averaging the distance between the closest adversarial examples and the original methods that are correctly predicted in a test set. We adopt the tree-edit distance (with node swapping operation) as the metric to measure the distance between programs because others like $L_0$, $L_1$ or $L_\infty$, which typically used in the setting of adversarial learning, are not suitable. Table~\ref{tab:c5} and~\ref{tab:c6} depict the robustness score for code2vec and code2seq based on programs that compile, which can be deemed as the first line of defense. Table~\ref{tab:att1} and~\ref{tab:att2} show on average how many attempts two methods take to find the closest adversarial examples. Given multiple applicable transformations, we rank them in ascending order of the distance between the resultant program after the transformation is applied and the original method. For transformations that result in programs of same distance, they will be picked randomly. The baseline refers to Wang and Christodorescu's method, which makes far more attempts than our method to find adversarial examples.
%If programs are not required to compile, the criteria code2vec and code2seq currently adopt, all adversarial examples are only \textit{one} edit away from the original methods for any correctly predicted program in Java-small, Java-med, or Java-large. 

\begin{table*}[!thbp]
	\begin{minipage}{.49\textwidth}
		\captionsetup{skip=1pt}
		\caption{Robustness score for code2vec.}
		\centering
		\adjustbox{max width=1\textwidth}{
			\begin{tabular}{c|c|c|c}
				\hline
				Methods & Java-small &  Java-med  & Java-large  \\\hline
				Baseline       & 3.24 & 2.95 & 3.31 \\\hline
				\textbf{\tool} & \textbf{1.78} & \textbf{1.26} & \textbf{2.43} \\\hline
		\end{tabular}}
		\label{tab:c5}
	\end{minipage}
	\begin{minipage}{.49\textwidth}
		\captionsetup{skip=1pt}
		\caption{Robustness score for code2seq.}
		\centering
		\adjustbox{max width=1\textwidth}{
			\begin{tabular}{c|c|c|c}
				\hline
				Methods & Java-small &  Java-med  & Java-large  \\\hline
				Baseline       & 2.39          & 3.50 & 3.14 \\\hline
				\textbf{\tool} & \textbf{1.12} & \textbf{1.68} & \textbf{3.02} \\\hline
		\end{tabular}}
		\label{tab:c6}
	\end{minipage}
\end{table*}
\vspace*{-12pt}
\begin{table*}[!thbp]
	\begin{minipage}{.49\textwidth}
		\captionsetup{skip=1pt}
		\caption{On average the number of attempts needed to find the closest adversarial example for code2vec.}
		\centering
		\adjustbox{max width=1\textwidth}{
			\begin{tabular}{c|c|c|c}
				\hline
				Methods & Java-small &  Java-med  & Java-large  \\\hline
				Baseline       & 12.1 & 14.3 & 15.7 \\\hline
				\textbf{\tool} & \textbf{2.1} & \textbf{3.2} & \textbf{2.4} \\\hline
		\end{tabular}}
		\label{tab:att1}
	\end{minipage}
	\begin{minipage}{.49\textwidth}
		\captionsetup{skip=1pt}
		\caption{On average the number of attempts needed to find the closest adversarial example for code2seq.}
		\centering
		\adjustbox{max width=1\textwidth}{
			\begin{tabular}{c|c|c|c}
				\hline
				Methods & Java-small &  Java-med  & Java-large  \\\hline
				Baseline       & 15.4          & 16.9 & 22.5 \\\hline
				\textbf{\tool} & \textbf{2.6} & \textbf{2.4} & \textbf{3.7} \\\hline
		\end{tabular}}
		\label{tab:att2}
	\end{minipage}
\end{table*}

We also dive deeper into the robustness scores we obtained, and find that both methods are heavily relying on variable renaming to create adversarial examples. For a fair evaluation of the other program transformations, we exclude variable renaming and repeat the same experiment. Table~\ref{tab:c7} and~\ref{tab:c8} report the percentage of programs for which adversarial examples can not be created with the other transformations for code2vec and code2seq respectively. Using the remaining programs on which adversarial examples can be created, we report the robustness score of code2vec and code2seq in Table~\ref{tab:c9} and~\ref{tab:c10}. Results presented in Table~\ref{tab:c7}-\ref{tab:c10} suggest that our approach finds adversarial examples not only for far more programs on all datasets but also with significantly smaller edits to the original methods if the variable renaming transformation is not considered.

\begin{table*}[!htbp]
	\begin{minipage}{.49\textwidth}
		\captionsetup{skip=1pt}
		\caption{The Percentage of programs for which adversarial examples can not be created for code2vec.}
		\centering
		\adjustbox{max width=1\textwidth}{
			\begin{tabular}{c|c|c|c}
				\hline
				Methods & Java-small &  Java-med  & Java-large  \\\hline
				Baseline  & 72\% & 68\% & 64\% \\\hline
				\textbf{\tool}     & \textbf{35\%} & \textbf{27\%} & \textbf{31\%} \\\hline
		\end{tabular}}
		\label{tab:c7}
	\end{minipage}
	\begin{minipage}{.49\textwidth}
		\captionsetup{skip=1pt}
		\caption{The Percentage of programs for which adversarial examples can not be created for code2seq.}
		\centering
		\adjustbox{max width=1\textwidth}{
			\begin{tabular}{c|c|c|c}
				\hline
				Methods & Java-small &  Java-med  & Java-large  \\\hline
				Baseline  & 68\% & 64\% & 67\% \\\hline
				\textbf{\tool}     & \textbf{28\%} & \textbf{29\%} & \textbf{33\%} \\\hline
		\end{tabular}}
		\label{tab:c8}
	\end{minipage}
\end{table*}
\vspace*{-15pt}
\begin{table*}[!htbp]
	\begin{minipage}{.49\textwidth}
		\captionsetup{skip=1pt}
		\caption{Re-evaluate the robustness of code2vec.}
		\centering
		\adjustbox{max width=1\textwidth}{
			\begin{tabular}{c|c|c|c}
				\hline
				Methods & Java-small &  Java-med  & Java-large  \\\hline
				Baseline       & 12.8 & 9.40 & 15.1 \\\hline
				\textbf{\tool} & \textbf{1.40} & \textbf{1.71} & \textbf{1.88} \\\hline
		\end{tabular}}
		\label{tab:c9}
	\end{minipage}
	\begin{minipage}{.49\textwidth}
		\captionsetup{skip=1pt}
		\caption{Re-evaluate the robustness of code2seq.}
		\centering
		\adjustbox{max width=1\textwidth}{
			\begin{tabular}{c|c|c|c}
				\hline
				Methods & Java-small &  Java-med  & Java-large  \\\hline
				Baseline       & 10.3          & 14.6         & 13.4 \\\hline
				\textbf{\tool} & \textbf{1.69} & \textbf{1.37} & \textbf{2.58} \\\hline
		\end{tabular}}
		\label{tab:c10}
	\end{minipage}
\end{table*}

\vspace{2pt}

\begin{mdframed}[style=default]
	{\hspace{4pt} 
 	Both code2vec and code2seq are vulnerable to adversarial examples. Perturbing the seed statements in a given method easily causes code2vec and code2seq to alter their predictions. }
\end{mdframed}

\vspace{8pt}

\subsection{Improving the Accuracy of code2vec and code2seq}
%\ke{1. give examples that after re-training models look at larger code window. 2. we did not overfitt the training set because the performance improves on the validation and test set.}

We propose a new technique to improve the prediction accuracy of code2vec and code2seq by guiding them to correct their own mis-predictions. To prepare code2vec and code2seq for the re-training, Table~\ref{tab:statsvec} and~\ref{tab:statsseq} give the number of labels under which we pick the mis-predicted programs to fix, the average error rate that the model displays on these labels, and the number of generated programs for re-training. 

\begin{table*}[h]
	\begin{minipage}{.49\textwidth}
		\captionsetup{skip=1pt}
		\caption{Preparation for the re-training of code2vec.}
		\centering
		\adjustbox{max width=1\textwidth}{
			\begin{tabular}{l|c|c|c}
				\hline
				Mis-predictions	& Java-small &  Java-med  & Java-large  \\\hline
				\# of labels  & 73 & 442 & 1186 \\\hline
				Average error rate  & 52\% & 60\% & 70\% \\\hline
				\# of generated programs   & 6,161 & 37,774 & 97,061 \\\hline
		\end{tabular}}
		\label{tab:statsvec}
	\end{minipage}
	\begin{minipage}{.49\textwidth}
		\captionsetup{skip=1pt}
		\caption{Preparation for the re-training of code2seq.}
		\centering
		\adjustbox{max width=1\textwidth}{
			\begin{tabular}{l|c|c|c}
				\hline
				Mis-predictions & Java-small &  Java-med  & Java-large  \\\hline
				\# of labels  & 118 & 552 & 1137 \\\hline
				Average error rate  & 64\% & 52\% & 72\%\\\hline
				\# of generated programs   & 6,438 & 39,592 & 99,566 \\\hline
		\end{tabular}}
		\label{tab:statsseq}
	\end{minipage}
\end{table*}

Table~\ref{tab:ret} shows the results of code2vec and code2seq on all three datasets after the re-training. Clearly, Our technique does not just overfit the models to their training set as their accuracy on the test set has also been consistently improved, especially code2seq which now achieves the state-of-the-art results on Java-med and Java-large. On the other hand, we acknowledge the improvement is not substantial, nevertheless, we believe the technique is still a significant contribution from the following aspects. 

\begin{table}[!htbp]
	\captionsetup{skip=1pt}	
	\caption{Results of code2vec and code2seq after the re-training.}
	\centering
	%	\normalsize
	\adjustbox{max width=\linewidth}{
		\begin{tabular}{l|ccc|ccc|ccc}
			\hline
			\multirow{2}{*}{\tabincell{c}{Models}} & \multicolumn{3}{c|}{Java-small} & \multicolumn{3}{c|}{Java-med} & \multicolumn{3}{c}{Java-large}\\
			\cline{2-10}
			& Precision & Recall & F1 & Precision & Recall & F1 & Precision & Recall & F1  \\\hline
			Baseline    & 18.82 & 17.14 & 17.94 & 40.95 & 28.63 & 33.70 & 47.46 & 37.19 & 41.70    \\\hline		
			code2vec    & 19.23 & 17.72 & 18.44 & 40.32 & 28.89 & 33.66 & 48.90 & 37.26 & 42.29    \\\hline		
			\textbf{code2vec (re-training)}& \textbf{22.75} & \textbf{18.59} & \textbf{20.46} & 
			\textbf{43.47} & \textbf{31.48} &\textbf{36.51} & \textbf{49.35} & \textbf{38.58} & \textbf{43.31} \\\hline\hline			
			Baseline    & 49.16 & 35.25 & 41.16  & 62.24 & 46.07 & 52.95 & 63.08 & 54.12 & 58.26   \\\hline
			code2seq    & 48.72 & 35.46 & 41.05  & 61.91 & 46.38 & 53.03 & 63.78 & 54.41 & 58.72   \\\hline
			\textbf{code2seq (re-training)}& \textbf{48.05} & \textbf{40.61} & \textbf{42.08}  & \textbf{63.24} & \textbf{47.89} & \textbf{54.50} & \textbf{66.49} & \textbf{56.43} & \textbf{61.05}
			\\\hline
		\end{tabular}
	}
	\label{tab:ret}
\end{table}

\begin{itemize}
	\item \textbf{Reliability:} All results presented in Table~\ref{tab:ret} are the average over ten separate training instances except those based on code2seq with Java-large, which take more than a week to train, is the average over five. Therefore, the higher accuracy that both models show is not random noise but a reliable improvement.

	\item \textbf{Simplicity:} As a major selling point, our technique is in nature a data augmentation approach, which does not require one to change the architecture of an existing model. By systemically augmenting the training set based on the previous mis-predictions, the improvement comes with a much lower cost than designing a new model.  
	
	\item \textbf{Effectiveness:} The baseline in Table~\ref{tab:ret} augments the training set with the same amount of additional data that are randomly selected from GitHub. The added data have the same label as the generated programs used for re-training. As shown in the table, randomly augmenting the training set does not always lead models to an improved accuracy as both models display almost the same performance as before. In addition, we give concrete evidence on the effect the re-training makes on code2seq. Given the mis-predicted program (Figure~\ref{fig:mispred1}), code2seq corrects its own mistake by expanding the seed statements (highlighted in Figure~\ref{fig:cor1}), which enables him to differentiate the method in Figure~\ref{fig:cor1} from that in Figure~\ref{fig:cor2}.

\end{itemize}

%\begin{description}[leftmargin=!,labelwidth=\widthof{\bfseries Effectiveness}]
%	\item[Reliability] All results presented in Table~\ref{tab:ret} are the average over ten separate training instances except those based on code2seq with Java-large, which take more than a week to train, is the average over five. Therefore, the higher accuracy that both models show is not random noise but a reliable improvement. 
%	
%	\item[Simiplicity] As a major selling point, our technique is in nature a data augmentation approach, which does not require one to change the architecture of an existing model. By systemically augmenting the training set based on the previous mis-predictions, the improvement comes with a much lower cost than designing a new model.  
%
%	\item[Effectiveness] The baseline in Table~\ref{tab:ret} augments the training set with the same amount of additional data that are randomly selected from GitHub. The added data have the same label as the generated programs used for re-training. As shown in the table, randomly augmenting the training set does not always lead models to an improved accuracy as both models display almost the same performance as before. In addition, we give concrete evidence on the effect the re-training makes on code2seq. Given the mis-predicted program (Figure~\ref{fig:mispred1}), code2seq corrects its own mistake by expanding the seed statements (highlighted in Figure~\ref{fig:cor1}), which enables him to differentiate the method in Figure~\ref{fig:cor1} from that in Figure~\ref{fig:cor2}.
%\end{description}

\begin{figure*}[htb!]
	\setlength{\intextsep}{-10pt}
	\captionsetup[subfigure]{aboveskip=-1pt,belowskip=-1pt}	
	\setlength{\abovecaptionskip}{2pt}
	\centering
	\begin{subfigure}{0.32\textwidth}
		\lstset{style=mystyle}
		\lstinputlisting[linewidth=4.4cm,morekeywords={var, public, String}]{code/correct1.cs}
		\caption{}
		\label{fig:cor1}
	\end{subfigure}
	\,\,
	\begin{subfigure}{0.3\textwidth}
		\lstset{style=mystyle}
		\lstinputlisting[linewidth=4.5cm,morekeywords={var, public, String}]{code/mis1.cs}
		\caption{}
		\label{fig:cor2}
	\end{subfigure}
	\caption{Given the more distinct seed statements from (\subref{fig:cor2}), code2seq correctly predicts the label of (\subref{fig:cor1}).}
	\label{fig:cor}
\end{figure*}

%where the baselines refers to a random selection of the programs with the labels we choose to fix. Our technique enables both code2vec and code2seq to achieve better results on all three datasets, especially code2seq which now outperforms itself, the state-of-the-art in code summarization. On the contrary, the baseline does not seem to help despite the addition training data. We believe our approach is still valuable had the baseline improved the accuracy of code2vec and code2seq. Because applying our technique on what would be an enlarged training set would make both models even more accurate.

\vspace{-6pt}

\begin{mdframed}[style=default]
	{\hspace{4pt} The augmentation approach improves the accuracy of both code2vec and code2seq by guiding them to correct their own mis-predictions, in particular, code2seq achieves the state-of-the-art results on Java-med and Java-large on top of its existing architecture.}
\end{mdframed}

%\vspace{1pt}

\subsection{Discussion on Threats to Validity}

A major threat to the validity of our approach is the assumption we made regarding the monotonicity property of code summarization models. Intuitively, given how machine learning models are trained to fit the training data, monotonicity is a reasonable assumption. In addition, this behavior is certainly backed up by our extensive experiments as no violation has been discovered. Nevertheless, we have not given a principle proof of the property which may not hold for the evaluated models. But even the monotonicity property is proven to be false, we believe the validity of our approach still holds to a great extent. 

First of all, the soundness of the approach is not affected since the validity of seeds, mutants, and concretizations are all verified through predictions made by the subject models, which means our primary findings --- the patterns that code2vec and code2seq learned for predicting method names --- remain to be valid, as a result, our secondary contribution regarding the applications of the pattern definitions --- a new method for evaluating the robustness and a new technique for improving the accuracy of code2vec and code2seq --- are also valid. The aspect that will be affected is the completeness of our approach, that is, the inferred grammar that defines what models learned for a label could very well miss programs for which models also predict the label. However, given the maintenance of soundness and its implications, we conclude the validity of our approach mostly holds without the monotonicity property.

%\ke{1. grammar covers all programs in the test set (move to eva and think what it takes to achieve 100\% prediction accuracy) 2.where to get the production rules complier would have used 3. does the retraining overfitting the data?}

\section{Related Work}
\label{sec:rel}

In this section, we survey two strands of related work: studying what models have learned and predicting the names of methods given their bodies.

\subsection{Patterns Models Have Learned}

In computer vision,~\citet{han2008bottom} propose an attribute graph grammar for parsing images with man-made objects, such as buildings, hallways, and kitchens. Their algorithm focuses on detecting exclusively the shape of rectangle, and uses six production rules to specify the various spatial relationships among the detected rectangles as the basis for object detection. Later, a similar grammar is applied to the setting of cloth modeling, an important task in human recognition and tracking, to model the wide variations of cloth configurations~\cite{chen2006composite}. A particular strength of their approach is the ability to provide insights into what patterns the algorithm recognized based on the activation of the production rules during inference.

\citet{10.1007Z} propose a visualization technique to demystify the function of intermediate feature layers and the operation of convolutional neural networks. Built upon a deconvolutional network~\cite{zeiler2011adaptive}, their technique reveals the input stimuli that excite individual feature maps at any layer in the model. It also allows one to observe the evolution of features during training and to diagnose potential problems with the model. As another contribution of the technique, they can reveal which parts of the input images are important for classification.

\citet{bau2018visualizing} present an analytic framework to visualize and understand generative adversarial networks. First, they identify the units in a layer whose featuremaps correspond to the detection of a class of objects (\eg trees). Second, they intervene within the network to switch off (or back on) the detection of the class of objects, (\eg forcing the activation of the identified units to be zero), and quantify the average causal effect of the ablation. Finally, they examine the contextual relationship between these causal object units and the background. 

\subsection{Code Summarization Models}

code2vec is the first code summarization model in a cross-projecting setting. It works by (1) decomposing the Abstract Syntax Tree (AST) of an input method into a collection of AST paths, each of which is a path between nodes in the AST, starting from one terminal, ending in another terminal, and passing through the common ancestor of both terminals; (2) aggregating the embedding learned for each AST path; and (3) predicting a probability distribution over a set of given labels based on the aggregated embedding. 
%by decomposing a program to a collection of paths in its abstract syntax tree, and learning the atomic representation of each path simultaneously with learning how to aggregate a set of them. 
code2vec also employs the attention mechanism~\cite{vaswani2017attention,bahdanau2014neural} to assign different weights for AST paths. In other words, the embedding of the method is a weighted sum of the embedding of a path in the AST.

code2seq is another notable code summarization model in the literature. Unlike code2vec, a discriminative model in nature, code2seq adopts an encoder-decoder architecture~\cite{devlin2014fast,cho-cho2014learning} to predict method names as sequences of words. For decoding, code2seq also attends to a set of representations designed to combine each path and token representation to generate the method names. code2seq achieved then the state-of-the-art results on all Java-small, Java-med, and Java-large, three public datasets of Java methods.

In a similar vein to code2seq, sequence GNN also adopts an encoder-decoder architecture to predict method names. To encode a given method, sequence GNN relies on the coordination between a sequence and graph model. That is, recurrent neural network first learns the sequence representation of each token in a program before gated graph neural network computes the state for every node in the AST. Like code2seq, they use another recurrent neural network to generate method names as sequences of words. 

LiGER is the first model that incorporates dynamic program features to the learning processing for code summarization. Their insight is that executions, which offer direct, precise, and canonicalized representations of the program behavior, help models to generalize beyond syntactic features. On the other hand, LiGER uses symbolic features learned from source code to reduce the heavy reliance a dynamic model has on program executions since high-coverage executions are not always readily available.

\section{Conclusion}

In this paper, we present the first formal definition of the patterns that code summarization models have learned. Based on this definition, we have developed a sound algorithm for producing such pattern definitions, and a working implementation for formalizing the patterns code2vec and code2seq have learned. We found that both code2vec and code2seq heavily focus on a small, local fraction of input methods to predict their names, indicating the limited generalizability of both models about global, semantic program properties. We also present two example applications of the pattern definitions. For evaluating the robustness of code2vec and code2seq, our method uses smaller perturbations and takes far fewer attempts than prior approaches to find adversarial examples. Regarding improving the accuracy of code2vec and code2seq, the new technique we designed based on adversarial training enables code2seq to achieve the state-of-the-art results on Java-med and Java-large.

\begin{itemize}
	\item We plan to further refine and deploy our implementation to inform the users of a finer-grained explanation for a predicted label such as which expression in the method is responsible for a particular word in the output so that users can better understand the reasoning models use when making predictions. 
	\item We plan to further diagnose the code summarization models to design a general, systematic framework for improving both the accuracy and robustness of code summarization models. 
	\item We plan to apply our technique to study the patterns that are learned by other models of code for solving different tasks in programming language.
\end{itemize}

\bibliography{reference}

%% Appendix
%\appendix
%\input{append}

\end{document}